\newtheorem{theorem}{Theorem}[section]
\newtheorem{example}[theorem]{Example}
\newtheorem{lemma}[theorem]{Lemma}
\newtheorem{remark}[theorem]{Remark}
\newtheorem{definition}[theorem]{Definition}
\newtheorem{proposition}[theorem]{Proposition}
\newenvironment{proof}[1][Proof]{\textbf{#1.} }{\ \rule{0.5em}{0.5em}}
\newcommand{\dru}{\delta_{ru}}
\newcommand{\dur}{\delta_{ur}}
\newcommand{\mm}{\mu_u}
\newcommand{\mr}{\mu_r}
\newcommand{\bu}{\beta_u}
\newcommand{\br}{\beta_r}
\newcommand{\gu}{\gamma_u}
\newcommand{\gr}{\gamma_r}
\newcommand{\pu}{\rho_u}
\newcommand{\pr}{\rho_r}
\begin{document}

\title{A two-patch epidemic model with nonlinear reinfection\\[2cm]
Un modelo epid\'emico de dos poblaciones con reinfecci\'on no lineal
}

\author{
Juan G. Calvo\thanks{CIMPA-Escuela de Matem\'atica, Universidad de Costa Rica, San Jos\'e, Costa Rica. 
E-Mail: \href{mailto: juan.calvo@ucr.ac.cr}{juan.calvo@ucr.ac.cr}}
\and % in case of need additional authors in the following format; en caso necesario los autores adicionales en el formato siguiente
{\sc Alberto Hern\'andez}\thanks{CIMPA-Escuela de Matem\'atica, Universidad de Costa Rica, San Jos\'e, Costa Rica.  
E-Mail: \href{mailto: albertojose.hernandez@ucr.ac.cr}{albertojose.hernandez@ucr.ac.cr}}
\and
{\sc Mason A. Porter}\thanks{Department of Mathematics, University of California Los Angeles, USA.  
E-Mail: \href{mailto: mason@math.ucla.edu}{mason@math.ucla.edu}}\and
{\sc Fabio Sanchez}\thanks{CIMPA-Escuela de Matem\'atica, Universidad de Costa Rica, San Jos\'e, Costa Rica.  
E-Mail: \href{mailto: fabio.sanchez@ucr.ac.cr}{fabio.sanchez@ucr.ac.cr}}
} % Name, Family Name, Address, City, Country. E-Mail.

\date{\small Received: xx-xx-xx; Revised: xx-xx-xx;\\ Accepted: xx-xx-xx}
 % Editor puts the dates; Las fechas las pone el editor

\maketitle
\newpage

\begin{abstract}

The propagation of infectious diseases and its impact on individuals play a major role in disease dynamics, and it is important to incorporate population heterogeneity into efforts to study diseases. As a simplistic but illustrative example, we examine interactions between urban and rural populations in the dynamics of disease spreading. Using a compartmental framework of susceptible--infected--susceptible ($\mathrm{SI\widetilde{S}}$) dynamics with some level of immunity, we formulate a model that allows nonlinear reinfection. We investigate the effects of population movement in the simplest scenario: a case with two patches, which allows us to model movement between urban and rural areas. To study the dynamics of the system, we compute a basic reproduction number for each population (urban and rural). We also compute steady states, determine the local stability of the disease-free steady state, and identify conditions for the existence of endemic steady states. From our analysis and computational experiments, we illustrate that population movement plays an important role in disease dynamics. In some cases, it can  be rather beneficial, as it can enlarge the region of stability of a disease-free steady state.

\end{abstract}

\KW Dynamical systems; population dynamics; mathematical modeling; biological contagions; population movement.

\begin{resumen}
La propagaci\'on de enfermedades infecciosas y su impacto en individuos juega un gran rol en la din\'amica de la enfermedad, y es importante incorporar heterogeneidad en la poblaci\'on en los esfuerzos por estudiar enfermedades. De manera simpl\'istica pero ilustrativa, se examina interacciones entre una poblaci\'on urbana y una rural en la din\'amica de la propagaci\'on de una enfermedad. Utilizando un sistema compartimental de din\'amicas entres susceptibles--infectados--susceptibles  ($\mathrm{SI\widetilde{S}}$) con cierto nivel de inmunidad, se formula un modelo que permite reinfecciones no lineales. Se investiga los efectos de movimiento de poblaciones en el escenario m\'as simple: un caso con dos poblaciones, que permite modelar movimiento entre un \'area urbana y otra rural. Con el fin de estudiar la din\'amica del sistema, se calcula el n\'umero b\'asico reproductivo para cada comunidad (rural y urbana). Se calculan tambi\'en puntos de equilibrio, la estabilidad local del estado libre de enfermedad, y se identifican condiciones para la existencia de estados de equilibrio end\'emicos. Del an\'alisis y experimentos computacionales, se ilustra que el movimiento en la poblaci\'on jueva un rol importante en la din\'amica del sistema. En algunos casos, puede ser beneficioso, pues incrementa la regi\'on de estabilidad del punto de equilibrio del estado libre de infecci\'on.
\end{resumen}

\PC Sistemas dinamicos, din\'amica de poblaciones; modelado matem\'atico; contagios biol\'ogicos; movimiento de poblaciones. 

\AMS 92D25, 92D30

%%%%%%%

%%%%%%%

\section{Introduction}

It is relatively easy for individuals to move between towns, cities, countries, and even continents; and incorporating movement between populations has become increasingly prevalent in the modeling and analysis of disease spreading \cite{ccc,pastor2015}. It is also important to consider movement in which individuals travel to a distinct location from their place of origin and then return to their original location in a relative short time. Such movement can lead to rapid spreading of infectious diseases, and examination of connected environments can give clues about the types of strategies that are needed for control of disease propagation \cite{bichara2016,martens2000}.

In 2003, the Severe Acute Respiratory Syndrome (SARS) epidemic was a major concern among public health officials worldwide \cite{cdc2018a}. This new infectious disease spread rapidly, and scientists and researchers scrambled to try to discern how to contain its spread (e.g., by reducing the spreading rate) and to seek treatments and a vaccine. The best control measure that was found at the time was to isolate individuals who had been in contact with infected individuals. The rapid spread of the disease was associated with movement of a doctor who was identified as ``Patient 0'' for SARS \cite{chowell2003,sanchez2005}. Population movement has also played an important role in subsequent events, such as the spread of ebola to the western hemisphere \cite{cdc2018b}, the spread of measles in some parts of the world by travelers \cite{cdc2018c}, and the resurgence of malaria through the massive migration of Nicaraguans to the northern part of Costa Rica \cite{malariacr}.

The use of compartmental models to describe the spreading of diseases has been explored thoroughly in numerous scenarios \cite{ccc,pastor2015}. For example, when there is nonlinear reinfection, an individual who was infected previously can become infected again through contact with an infectious individual after losing immunity \cite{treno2007}. Several models that allow individuals to lose immunity and become infectious again also exhibit a backward bifurcation in which a stable endemic equilibrium coexists with a stable disease-free equilibrium when the associated basic reproduction number is smaller than $1$  \cite{feng2000,sanchez2007,song2006,song2013,sanchez2018}. Moreover, in social contagion processes (e.g., spread of use of drugs, adoption of products, and so on), after an initial ``contagion'', some models include a backward bifurcation that can arise via social inputs \cite{song2006}. 

In the present paper, we generalize the compartmental model from~\cite{sanchez2007}, who studied a continuous dynamical system 
(in the form of coupled ordinary differential equations) that describes interactions between susceptible and infected individuals with the possibility of reinfection after loss of immunity, by incorporating population movement between urban and rural environments. Taking population movement into account is important for studies of disease dynamics in practice, and it changes the qualitative dynamics of disease spreading. We explore the simplest case, in which a population has two patches, and yield insights that will be useful for later explorations of disease spreading in a population that includes a network of patches.

Our paper proceeds as follows. In \Cref{sec:model}, we describe our compartment model of disease spreading (including nonlinear reinfection) between urban and rural environments. In \Cref{sec:analysis}, we give a formula for the model's basic reproduction number $\mathcal{R}_0$, analyze the existence and local stability of the disease-free state, and study the existence of endemic equilibria. In \Cref{sec:sims}, we illustrate several example scenarios with numerical computations. Finally, in \Cref{sec:disc}, we conclude and discuss the biological insights of our model.

%%%%

\section{A two-patch compartmental model} \label{sec:model}

We present a two-patch model of disease spreading in humans that incorporates nonlinear reinfection and population movement. Specifically, we generalize the model by Sanchez et al. \cite{sanchez2007} by incorporating the idea of urban versus rural environments. We use the subscript $u$ for urban variables and parameters and the subscript $r$ for rural variables and parameters. For $j\in\lbrace u,r\rbrace$, let $S_j$, $I_j$, and $\widetilde{S}_j$ denote the numbers of susceptible, infected, and post-recovery susceptible individuals, respectively. New susceptible individuals enter the system in proportion to the total population $N_j$, where $N_j=S_j+I_j+\widetilde{S}_j$. Let $\mu_j$ denote the rate of both births and deaths (for simplicity, we assume that they are the same.) Susceptible individuals become infected at a rate of $\beta_j$, infected individuals transition to a state of post-recovery susceptibility at a rate of $\gamma_j$, and post-recovery susceptible individuals can be reinfected at a rate of $\rho_j$. Such reinfection corresponds to infectious diseases (such as tuberculosis, malaria, and others \cite{glynn2008,grun1983}) in which subsequent infections are possible after loss of immunity. Typically, when there exists the possibility of reinfection, an initial infection tends to produce stronger symptoms \cite{crutcher1996}. We model movement between patches using the functions $\delta_{ij}(t)$, which denote the fraction of individuals who travel from patch $i \in\lbrace u,r\rbrace$ to patch $j \in\lbrace u,r\rbrace$ (with $i \neq j$) at time $t$. 

Our model consists of the following coupled system of ordinary differential equations: 
\begin{equation} \label{eq:sys}
	\begin{array}{rcl}
\dfrac{d S_u}{dt} &=& \mu_u N_u - \beta_u \dfrac{I_u}{N_u} S_u -\mu_u S_u + \dru S_r - \dur S_u\,,\\
\dfrac{d I_u}{dt} &=& \beta_u \dfrac{I_u}{N_u} S_u - (\mu_u+\gamma_u)I_u + \rho_u \dfrac{I_u}{N_u} \widetilde{S}_u +\dru I_r - \dur I_u\,,\\ 
\dfrac{d \widetilde{S}_u}{dt} &=& \gamma_u I_u - \rho_u \dfrac{I_u}{N_u} \widetilde{S}_u  - \mu_u \widetilde{S}_u +\dru \widetilde{S}_r - \dur \widetilde{S}_u\,,\\
\dfrac{d S_r}{dt} &=& \mu_r N_r - \beta_r \dfrac{I_r}{N_r} S_r -\mu_r S_r + \dur S_u - \dru S_r\,,\\
\dfrac{d I_r}{dt} &=& \beta_r \dfrac{I_r}{N_r} S_r - (\mu_r+\gamma_r)I_r + \rho_r \dfrac{I_r}{N_r} \widetilde{S}_r +\dur I_u - \dru I_r\,,\\ 
\dfrac{d \widetilde{S}_r}{dt} &=& \gamma_r I_r - \rho_r \dfrac{I_r}{N_r} \widetilde{S}_r  - \mu_r \widetilde{S}_r +\dur \widetilde{S}_u - \dru \widetilde{S}_r\,,\\
	\end{array}
\end{equation}
with initial conditions $S_j(0)=S_{j0}$, $I_j(0)=I_{j0}$, and $\widetilde{S}_j(0)=\widetilde{S}_{j0}$ (for $j\in\lbrace u,r\rbrace$). 

By adding the first three and last three equations in \eqref{eq:sys}, we see that $N_u$ and $N_r$ satisfy the linear dynamical system
\begin{equation*}
	\dfrac{d}{dt}\left[ 
\begin{array}{c}
N_u \\ 
N_r
\end{array} \right] = 
\left[
\begin{array}{rr}
-\dur & \dru \\ 
\dur & -\dru
\end{array}
\right]
\left[ 
\begin{array}{c}
N_u \\ 
N_r
	\end{array} \right]\,.
\end{equation*}
Its solution is
\begin{align*}
	N_u(t) &= e^{-\int_0^t \delta(s) ds}\left[ N_{u,0}\  + (N_{u,0}+N_{r,0})\ \int_0^t \dru(s) e^{\int_0^s \delta(h) dh}\ ds\right]\,,\\
N_r(t) &= e^{-\int_0^t \delta(s) ds}\left[ N_{r,0}\  + (N_{u,0}+N_{r,0})\ \int_0^t \dur(s) e^{\int_0^s \delta(h) dh}\ ds\right]\,,
\end{align*}
where $\delta(t) = \dur(t)+\dru(t)$ is the net movement of individuals at time $t$. The initial conditions are $N_{u,0}  = N_u(0)$ and $N_{r,0}  = N_r(0)$. 

In our analysis, we assume at first that $\dru$ and $\dur$ are constant, such that $N_u$ and $N_r$ simplify to 
\begin{align*}
	N_u(t) &= N_{u,0} e^{-t (\dur+\dru)} + (N_{u,0}+N_{r,0})\dfrac{\dru}{\dru+\dur} \left(1-e^{-t (\dur+\dru)}\right)\,, \\
N_r(t) &= N_{r,0} e^{-t (\dur+\dru)} + (N_{u,0}+N_{r,0})\dfrac{\dur}{\dru+\dur} \left(1-e^{-t (\dur+\dru)}\right)\,. 
\end{align*}
There are three cases:
\begin{enumerate}
    \item There is no movement (i.e., $\dur = \dru = 0$), which corresponds to the case of independent patches that was studied in \cite{sanchez2007}. 
     In this case, each patch has a backward bifurcation, and the steady state can depend on the number of initially infected individuals. 
    \item Movement occurs exclusively from one patch to the other. For example, suppose that $\dur = 0$ and $\dru>0$. In this case, $\displaystyle\lim_{t\to\infty} N_r(t) = 0$ (i.e., eventually, the entire population is in $u$). There is a backward bifurcation in this case as well, but now there is a total population of $N_{u,0}+N_{r,0}$ in the urban patch. We explore this case in Example \ref{ex:1} in Section \ref{sec:sims}.      
    \item There is movement in both directions between the two patches (i.e., $\dur>0$ and $\dru>0$). This is the primary scenario (and the principal novel contribution) of the present paper.
\end{enumerate}

%%%%

\section{Analysis of our model} \label{sec:analysis}

\subsection{Disease-free steady state and basic reproduction number} \label{sec:dfe}

Because the total population is constant, we can eliminate $\widetilde{S}_r$ from the dynamical system \eqref{eq:sys}. We then compute the Jacobian matrix of \eqref{eq:sys} and evaluate it at the disease-free steady state
\begin{equation}\label{free}
	(S_u^*,I_u^*,\widetilde{S}_u^*,S_r^*,I_r^*,\widetilde{S}_r^*)=(S_u^*,0,0,S_r^*,0,0) 
\end{equation}
to obtain the matrix
\begin{equation} \label{blah}
	\left[
\begin{array}{rrrrr}
-\dur & \mm-\bu &\mm & \dru & 0\\
0& \eta_{ur} & 0 & 0 & \dru \\
-\dru & \gu -\dru & -\dru-\dur-\mm &-\dru & -\dru\\
\dur-\mr& -\mr & -\mr &-\mr -\dru &-\br\\
0& \dur &0 &0 & \eta_{ru} \\
\end{array}
\right]\,,
\end{equation}
where $\eta_{ur} = \bu - \dur - \gu - \mm$ and $\eta_{ru} = \br - \dru - \gr - \mr$. For the dynamical system \eqref{eq:sys}, equilibria with $I_u=I_r=0$ necessarily also satisfy $\widetilde{S}_u=\widetilde{S}_r=0$.

The five eigenvalues of the matrix \eqref{blah} are
\begin{align*}
    \lambda_1 &= -(\dru+\dur)\,,\\
    \lambda_{2\pm} &= \frac{1}{2} \left[-(\dru + \dur + \mr + \mm) \pm \sqrt{(\dru - \dur + \mr - \mm)^2 + 4 \dru \dur} \right]\,,\\
    \lambda_{3\pm} &= \frac{1}{2}\left[ \eta_{ur}+\eta_{ru} \pm \sqrt{\left( \eta_{ur}-\eta_{ru} \right)^2 +4\dru\dur} \right]\,.\\
\end{align*}
All eigenvalues are real, and $\lambda_1$ and $\lambda_{2\pm}$ are negative. The two remaining eigenvalues $\lambda_{3\pm}$ are negative as long as
\begin{align} \label{eq:condEigNeg}
    \dru \dur &< \eta_{ur}\eta_{ru}\,, \quad  \eta_{ur}+ \eta_{ru} < 0 \,.
\end{align}

When there is only one population (i.e., one patch), the basic reproduction number is $\mathcal{R}_0=\frac{\beta}{\mu+\gamma}$. For our multiple-patch case, we define a ``local basic reproduction number'' for each patch:
\begin{equation*}
	\mathcal{R}_{0u} = \dfrac{\bu}{\gu+\mm}\,, \quad \mathcal{R}_{0r} = \dfrac{\br}{\gr+\mr}\,.
\end{equation*}	
We can then express the conditions in \eqref{eq:condEigNeg} as
\begin{subequations}\label{eq:condR0}
\begin{align} 
	\mathcal{R}_{0u} &< 1 + \dfrac{\dur}{\mm+\gu}\,, \label{eq:condR0a}\\
\mathcal{R}_{0r} &< 1 + \dfrac{\dru}{\mr+\gr},\label{eq:condR0b}\\ \dfrac{\dur}{\mm+\gu}\dfrac{\dru}{\mr+\gr} &< \left(\mathcal{R}_{0u}-1-\dfrac{\dur}{\mm+\gu}\right)\left(\mathcal{R}_{0r}-1-\dfrac{\dru}{\mr+\gr}\right)\,. \label{eq:condR0c}
\end{align}
\end{subequations}
See Figure \ref{fig:stab_reg} for an illustration of a typical region in which all eigenvalues are negative. For progressively smaller $\dur$ and $\dru$, the shaded region approaches the unit square.

\begin{figure}[htb!]
\centering
\includegraphics[width=0.7\linewidth]{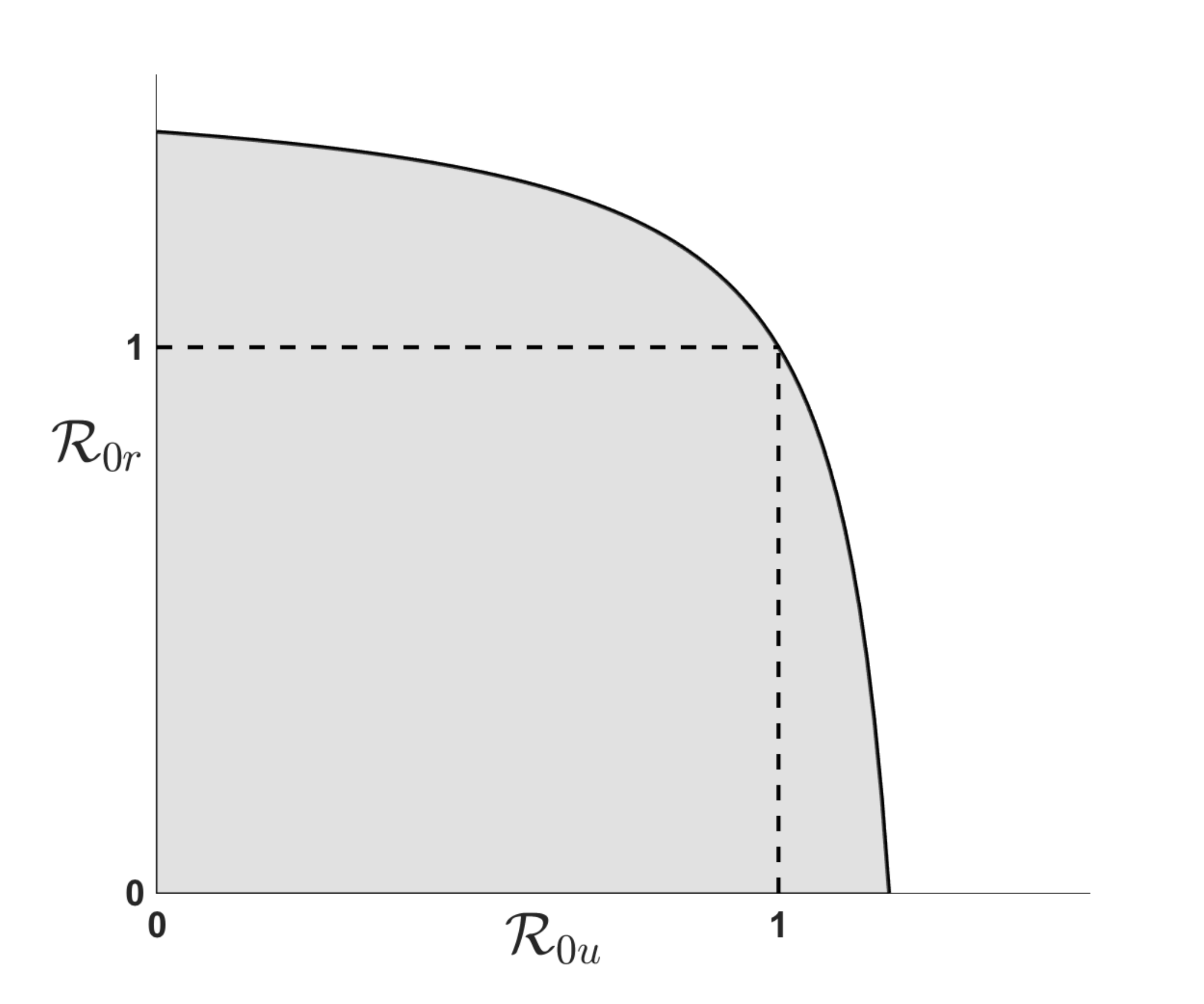}
\caption{A typical region of local asymptotic stability of the disease-free steady state \eqref{free} of the dynamical system \eqref{eq:sys} from the conditions \eqref{eq:condR0} as a function of the local basic reproduction numbers $\mathcal{R}_{0u}$ and $\mathcal{R}_{0r}$. \label{fig:stab_reg}}
\end{figure}

We have thus established the following lemma.

\begin{lemma} \label{lem:diseaseFree}
Assume that Eqs.~\eqref{eq:condR0} hold. It then follows that the disease-free steady state \eqref{free} of the dynamical system \eqref{eq:sys} is locally asymptotically stable.
\end{lemma}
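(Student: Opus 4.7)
The plan is to apply the standard linearization theorem for hyperbolic equilibria: since matrix \eqref{blah} is the Jacobian of the system \eqref{eq:sys} (reduced to five dimensions by eliminating $\widetilde{S}_r$) at the disease-free steady state \eqref{free}, it suffices to verify the five claimed eigenvalues and confirm that all of them are strictly negative whenever \eqref{eq:condR0} holds.

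First I would expose a block-triangular structure. Reordering the states as $(S_u, \widetilde{S}_u, S_r, I_u, I_r)$, the rows corresponding to $dI_u/dt$ and $dI_r/dt$ at the disease-free steady state have entries only in the $I_u$ and $I_r$ columns, because every contagion term in \eqref{eq:sys} is quadratic in the $I_j$. The Jacobian therefore decomposes into a $3\times 3$ demographic block $A$ acting on $(S_u, \widetilde{S}_u, S_r)$ and a $2\times 2$ lower-right block
\begin{equation*}
    C = \begin{bmatrix} \eta_{ur} & \dru \\ \dur & \eta_{ru} \end{bmatrix},
\end{equation*}
so that the spectrum of \eqref{blah} is the disjoint union of the spectra of $A$ and $C$.

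Second, I would verify that $A$ contributes $\lambda_1, \lambda_{2\pm}$ and that these are always negative. A short calculation shows that $v = (1,0,-1)^{\top}$ is an eigenvector of $A$ with eigenvalue $\lambda_1 = -(\dru+\dur)$; applying Vieta's formulas with $\operatorname{tr}(A) - \lambda_1$ and $\det(A)/\lambda_1$ then reproduces the stated expressions for $\lambda_{2\pm}$. Both are real (the discriminant is a sum of a square and $4\dru\dur \ge 0$), their sum $-(\dru+\dur+\mr+\mm)$ is negative, and a brief expansion yields their product as $\dru\mm + \dur\mr + \mr\mm$, which is strictly positive under the standard positivity assumptions on the mortality rates. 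Hence $\lambda_1, \lambda_{2\pm} < 0$ unconditionally.

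Finally, the Routh--Hurwitz criterion applied to the $2\times 2$ block $C$ gives negative real parts for $\lambda_{3\pm}$ if and only if $\operatorname{tr}(C) = \eta_{ur}+\eta_{ru} < 0$ and $\det(C) = \eta_{ur}\eta_{ru} - \dru\dur > 0$, which are exactly the conditions \eqref{eq:condEigNeg}. Dividing through by the positive quantities $\mm+\gu$ and $\mr+\gr$ and substituting the definitions of $\mathcal{R}_{0u}$ and $\mathcal{R}_{0r}$ recasts \eqref{eq:condEigNeg} as \eqref{eq:condR0a}--\eqref{eq:condR0c}. The main obstacle is the algebraic bookkeeping in the Vieta step for $A$ and in the sign check for $\lambda_{2\pm}$; neither presents any conceptual difficulty. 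With all five eigenvalues shown to be strictly negative, the Hartman--Grobman theorem yields local asymptotic stability of \eqref{free}.
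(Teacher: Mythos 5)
Your proof is correct and follows essentially the same route as the paper: linearize at the disease-free state \eqref{free}, read off the spectrum of the Jacobian \eqref{blah}, and observe that $\lambda_1,\lambda_{2\pm}$ are unconditionally negative while $\lambda_{3\pm}$ are negative precisely under \eqref{eq:condEigNeg}, which is then rewritten as \eqref{eq:condR0}. Your block-triangular reordering and the Vieta/Routh--Hurwitz sign checks simply make explicit the eigenvalue computation that the paper states without derivation, so the two arguments coincide in substance.
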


\begin{remark} {\rm
The conditions in \eqref{eq:condR0} are satisfied when $\mathcal{R}_{0u}<1$ and $\mathcal{R}_{0r}<1$. We then have local stability in the rural and urban patches if we treat them as independent. Additionally, from Lemma \ref{lem:diseaseFree}, we see that it is possible to obtain local asymptotic stability for the disease-free steady state \eqref{free} even when one or both local basic reproduction numbers are larger than $1$. In such a scenario, movement is beneficial, as it leads to local asymptotic stability of the disease-free steady state in situations that would not be the case for independent patches. We illustrate such a scenario in Example \ref{ex:2} in Section \ref{sec:sims}. 
}
\end{remark}

%%%

\subsection{Technical tool: The Poincar\'e--Miranda theorem}
\label{sec:ee}

As preparation for analyzing the existence of steady states in our model, we briefly recall some classical results by Poincar\'e and others. See \cite{frankowska2018, kulpa1997, szyman2015, turzanski2012} for detailed accounts of the relevant theory. In 1817, Bolzano proved the following well-known theorem:
\begin{theorem}
Let $f:[a,b]\longrightarrow \mathbb{R}$ be a continuous function such that $f(a) \cdot f(b)< 0$. There then exists $c \in (a,b)$ such that $f(c) = 0$.
\end{theorem}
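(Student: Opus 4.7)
My plan is to reduce to the case $f(a)<0<f(b)$ (replacing $f$ by $-f$ if necessary, which preserves continuity) and then to invoke the completeness of $\mathbb{R}$. The most natural device is the supremum approach: set $S=\{x\in[a,b]:f(x)\le 0\}$, observe that $a\in S$ and $S$ is bounded above by $b$, and put $c=\sup S$. Continuity at the endpoints, combined with the sign conditions $f(a)<0$ and $f(b)>0$, forces $c\in(a,b)$, since $f$ keeps its sign on a one-sided neighborhood of each endpoint.

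The substantive step is to rule out $f(c)\neq 0$. If $f(c)>0$, continuity produces some $\varepsilon>0$ such that $f>0$ on $(c-\varepsilon,c]$, contradicting the approximation property of the supremum (there must be points of $S$ arbitrarily close to $c$ from below). If $f(c)<0$, continuity produces $\varepsilon>0$ such that $f<0$ on $[c,c+\varepsilon)$, so $c+\varepsilon/2\in S$, contradicting that $c$ is an upper bound of $S$. Both alternatives lead to a contradiction, so $f(c)=0$.

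Alternatively, one could run the bisection method, building nested intervals $[a_n,b_n]\subset[a,b]$ with $f(a_n)\le 0\le f(b_n)$ and $b_n-a_n=(b-a)/2^n$; the common point extracted via completeness is the desired $c$, and continuity gives $f(c)=0$ by passing to the limit in the sign inequalities. I do not anticipate a genuine obstacle here, since the theorem is essentially a restatement of the completeness axiom in the continuous setting; the only care needed is to keep strict and non-strict inequalities consistent at the supremum step and to ensure $c$ lies in the open interval $(a,b)$ rather than at an endpoint.
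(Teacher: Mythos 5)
Your supremum argument is correct: the set $S=\{x\in[a,b]:f(x)\le 0\}$ is nonempty and bounded, the sign conditions at the endpoints push $c=\sup S$ into the open interval, and the two-case contradiction cleanly rules out $f(c)\neq 0$. Note, however, that the paper offers no proof of this statement at all: it is quoted as Bolzano's classical 1817 theorem purely as motivation for the Poincar\'e--Miranda theorem, so there is no argument in the text to compare yours against; your proof (and the bisection alternative you sketch) is the standard completeness-based one and would be a perfectly acceptable way to fill that omission.
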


\begin{definition}
Let $I^n = [0,1]^n$, and let $\partial I^n$ denote its boundary. For each $i \in \{1, \dots, n\}$, let 
\begin{align*}
    I_i^- = \{x \in I^n | x_i = 0\} \,, \qquad
    I_i^+ = \{x\in I^n | x_i = 1 \}\,,
\end{align*}
be the opposite $i$-th faces of the boundary $\partial I^n$.
\end{definition}

In 1883--1884, Poincar\'e announced a generalization of Bolzano's theorem without providing a proof. 

\begin{theorem}[Poincar\'e] 
Let $F: I^n \longrightarrow \mathbb{R}^n$, with $F = (F_1,\cdots, F_n)$, be a continuous map such that
\begin{equation*} 
	F_i(I_{i}^-) \subseteq (-\infty, 0]
\end{equation*}
and 
\begin{equation*}
	F_i(I_{i}^+) \subseteq [0, +\infty)
\end{equation*}
for every $i \in \{1, \dots, n\}$. It then follows that there exists $c\in I^n$ such that $F(c) = 0$.
\end{theorem}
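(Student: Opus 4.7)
The plan is to deduce the theorem from Brouwer's fixed-point theorem by constructing an auxiliary self-map of $I^n$ whose only fixed points must be zeros of $F$. The candidate I would use is the componentwise truncation of $x \mapsto x - F(x)$: define $H : I^n \to I^n$ by
\[
H_i(x) = \max\bigl(0,\ \min(1,\ x_i - F_i(x))\bigr), \qquad i = 1,\ldots, n.
\]
This is a composition of continuous maps and lands in $I^n$ by construction, so Brouwer's theorem applied to $H$ would produce some $c \in I^n$ with $H(c) = c$.

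The key step is then a coordinate-by-coordinate case analysis at $c$. Writing $y_i = c_i - F_i(c)$, the identity $c_i = \max(0, \min(1, y_i))$ splits into three cases. If $y_i \in [0,1]$, it collapses to $c_i = y_i$, giving $F_i(c) = 0$ immediately. If $y_i < 0$, the outer $\max$ forces $c_i = 0$, so $c$ lies in the face $I_i^-$; the hypothesis $F_i(I_i^-) \subseteq (-\infty, 0]$ then yields $F_i(c) \leq 0$, whereas $y_i < 0$ combined with $c_i = 0$ gives $F_i(c) > 0$, a contradiction. The case $y_i > 1$ is symmetric and uses the hypothesis on $I_i^+$. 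Hence only the first case can occur at a fixed point, and $F_i(c) = 0$ for every $i$, so $F(c) = 0$.

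The main obstacle to identify is the choice of auxiliary map: the naive $G(x) = x - F(x)$ need not send $I^n$ into itself, so Brouwer does not apply directly. The point of the componentwise truncation is that $H$ restores the self-mapping property while, crucially, introducing no spurious fixed points. Truncation in coordinate $i$ can be active only when $c_i \in \{0,1\}$, i.e.\ when $c$ lies on the face $I_i^\pm$, and it is precisely on that face that the sign hypothesis on $F_i$ contradicts truncation being active at a fixed point. Once this structural alignment between the boundary hypothesis and the truncation is recognized, the rest of the argument is routine. A heavier alternative would be first to perturb $F$ so that the boundary inequalities become strict, apply a homotopy or degree argument to a smoothing, and pass to the limit by compactness of $I^n$; but the truncation proof above is strictly more elementary.
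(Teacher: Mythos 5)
Your argument is correct, but note that the paper does not actually prove this statement: it quotes the Poincar\'e--Miranda theorem as a classical background result (remarking that Poincar\'e announced it without proof and that Miranda later showed it to be logically equivalent to Brouwer's fixed-point theorem), and it refers the reader to the cited literature for proofs. So there is no in-paper argument to compare against; what you have supplied is the standard elementary derivation from Brouwer, i.e.\ one direction of the equivalence the paper merely mentions. Checking the details: $H_i(x)=\max\bigl(0,\min(1,x_i-F_i(x))\bigr)$ is a continuous self-map of the compact convex set $I^n$, so Brouwer yields a fixed point $c$; the trichotomy on $y_i=c_i-F_i(c)$ is exhaustive; if $y_i<0$ then $c_i=0$, so $c\in I_i^-$ and $F_i(c)\le 0$, while $y_i=-F_i(c)<0$ forces $F_i(c)>0$, a contradiction, and the case $y_i>1$ is symmetric on $I_i^+$; hence $y_i\in[0,1]$ and $F_i(c)=0$ for every $i$. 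Your diagnosis of the key point is also the right one: the naive map $x\mapsto x-F(x)$ need not preserve $I^n$, and the truncation is harmless precisely because it can only be active on a face where the sign hypothesis on $F_i$ makes the displacement point back into the cube. The heavier degree-theoretic alternative you sketch is unnecessary here; the truncation proof is complete as written and is the appropriate level of generality for the two-dimensional variant (Proposition 3.4) that the paper actually uses.
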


In the 1940s, Miranda rediscovered Poincar\'e's theorem and showed that it is logically equivalent to Brower's fixed-point theorem. Since then, this result has often been called the Poincar\'e--Miranda theorem. We require a modified version of the Poincar\'e--Miranda theorem in two dimensions.

\begin{proposition} \label{prop:miranda}
Let $F: [0,1]^2 \longrightarrow \mathbb{R}^2$ such that $F(x,y)= (F_1(x,y),F_2(x,y))^T$ is continuous and
$F(0,0) = 0$. Assume that
\begin{equation*}
\begin{array}{cc}
	F_1(x,0) > 0 \quad \mathrm{for\,\, all}\,\,\, x\in (0,1]\,,	& F_1(x,1) <0\ \quad \mathrm{for\,\, all}\,\,\, x\in [0,1]\,,\\
	F_2(0,y) > 0 \quad \mathrm{for\,\, all}\,\,\, y\in (0,1]\,,	& F_2(1,y) <0\ \quad \mathrm{for\,\, all}\,\,\, y\in [0,1]\,.
\end{array}
\end{equation*}
Assume additionally that $ \partial{F_1} / \partial{y}$ and $\partial{F_2} / \partial{x}$ are both continuous from the right at $(0,0)$, with $\partial{F_1} / \partial{y}(0,0)>0$ and $\partial{F_2} / \partial{x}(0,0)>0$. It then follows that there exists $(x_0,y_0) \in (0,1)^2$ such that $F(x_0,y_0) = (0,0)^T$.
\end{proposition}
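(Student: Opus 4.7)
The plan is to apply the Poincar\'e--Miranda theorem on a slightly contracted box $[\epsilon,1]^2$: the derivative hypotheses at the origin will be used to kill the trivial zero at $(0,0)$ and to verify the bottom and left face-sign conditions of Poincar\'e--Miranda on this box.

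First I would exploit the derivative information to establish local positivity of $F_1$ and $F_2$ near the origin. Setting $c_1=\partial F_1/\partial y(0,0)>0$, right-continuity at $(0,0)$ yields a $\delta>0$ with $\partial F_1/\partial y(x,s)>c_1/2$ for $(x,s)\in[0,\delta]^2$; integrating in $s$ and using $F_1(x,0)\geq 0$ on $[0,1]$ (which follows from $F_1(x,0)>0$ for $x>0$, $F_1(0,0)=0$, and continuity) gives the bound $F_1(x,y)\geq (c_1/2)\,y$ on $[0,\delta]^2$. A symmetric argument yields $F_2(x,y)\geq (c_2/2)\,x$ on $[0,\delta']^2$, with $c_2=\partial F_2/\partial x(0,0)>0$.

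Next, set $\delta_0=\min(\delta,\delta')$. On the compact interval $[\delta_0,1]$ the continuous function $x\mapsto F_1(x,0)$ is strictly positive and hence bounded below by some $m_1>0$, so uniform continuity of $F_1$ on $[0,1]^2$ gives $F_1(x,\epsilon)>m_1/2$ for $x\in[\delta_0,1]$ and all sufficiently small $\epsilon>0$. Combined with $F_1(x,\epsilon)\geq (c_1/2)\epsilon>0$ on $[0,\delta_0]$ from the previous step, this shows $F_1(\cdot,\epsilon)>0$ on $[\epsilon,1]$ once $\epsilon$ is small. The analogous argument gives $F_2(\epsilon,\cdot)>0$ on $[\epsilon,1]$. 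Together with the hypotheses $F_1(\cdot,1)<0$ and $F_2(1,\cdot)<0$, the continuous map $G:=(-F_2,-F_1)$ satisfies the Poincar\'e--Miranda sign conditions on $[\epsilon,1]^2$ (after the obvious affine identification with $[0,1]^2$), so it vanishes at some $(x_0,y_0)\in[\epsilon,1]^2$, whence $F(x_0,y_0)=(0,0)^T$. Because all four face inequalities are strict, $(x_0,y_0)$ cannot lie on the boundary of $[\epsilon,1]^2$, so $(x_0,y_0)\in(\epsilon,1)^2\subset(0,1)^2$, as required.

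The main technical subtlety is in the first step: the right-continuity hypothesis on $\partial F_1/\partial y$ at $(0,0)$ has to be read as providing both the existence of this partial derivative on an entire one-sided neighborhood of $(0,0)$ and a uniform lower bound on it, so that integration along vertical segments is legitimate. Once that interpretation is accepted, the remainder of the argument is a bookkeeping exercise: combining the strict face signs from the hypotheses with standard uniform-continuity estimates to verify the Poincar\'e--Miranda conditions on the shrunken box $[\epsilon,1]^2$, and then using strictness to push the resulting zero into the open interior.
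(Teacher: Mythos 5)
Your proof is correct, and its overall architecture is the same as the paper's: contract the square to $[\epsilon,1]^2$ so as to excise the trivial zero at the origin, verify the Poincar\'e--Miranda sign conditions on the two new faces $\{y=\epsilon\}$ and $\{x=\epsilon\}$, and invoke the theorem, with strictness of all four face inequalities forcing the resulting zero into the open interior. Where you genuinely diverge is in the one nontrivial step, namely establishing positivity of $F_1$ on the shifted bottom face near the origin (and symmetrically for $F_2$). The paper does this by invoking the implicit function theorem to parametrize the level set of $F_1$ as $y=g_1(x)$ and computing $g_1'(0)=-\frac{\partial F_1/\partial x(0,0)}{\partial F_1/\partial y(0,0)}$; this argument leans on $\partial F_1/\partial x$, which is not among the stated hypotheses, and the passage from the slope of the level curve to the claim that $F_1(\cdot,\epsilon_1)>0$ on all of $[0,1]$ is left implicit. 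Your route is more elementary and self-contained: you use right-continuity of $\partial F_1/\partial y$ at the origin to get a uniform lower bound $\partial F_1/\partial y>c_1/2$ on a small box, then a mean-value (or integration) estimate along vertical segments gives $F_1(x,y)\geq (c_1/2)\,y$ there, which you splice with a compactness/uniform-continuity bound away from the origin. This uses exactly the hypotheses as written, avoids the implicit function theorem entirely, and makes explicit the quantitative wedge $F_1\geq (c_1/2)\,y$ that the paper's proof only gestures at. You are also right to flag that the hypothesis ``continuous from the right at $(0,0)$'' must be read as asserting existence of the partial derivative on a one-sided neighborhood; the paper's proof tacitly makes the same (indeed a stronger) assumption.
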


 \begin{proof}
Because $F$ is continuous, both $F_1$ and $F_2$ are also continuous. Using the facts that $F_1(x,0) >0$ for all $x \in (0,1]$ and that $\partial {F_1} / \partial {x}$ is continuous from the right, it follows that there exists $\epsilon_0 > 0$ such that $\partial {F_1} /\partial x (x,0) > 0$ for all $x \in (0, \epsilon_0)$. By the implicit function theorem, for $x \in (0,\epsilon_0)$, one can write $y = g_1(x)$, where the function $g_1$ is differentiable. Therefore,
\begin{equation}
\label{inverse thm}
 	g_1'(0) = -\frac{ \frac{\partial F_1}{\partial x}(0,0) }{\frac{\partial F_1}{\partial y}(0,0)} \,. 
 \end{equation}
By the continuity of $F_1$ in the compact set $[0,1]^2$ and using Equation \eqref{inverse thm}, it follows that there exists $\epsilon_1 > 0$ such that $F_1(x,\epsilon_1) > 0$ for all $x \in [0,1]$.
By the same argument, there also exists $\epsilon_2 \in (0,1)$ such that $F_2(\epsilon_2,y)> 0$ for all $y \in [0,1]$. By the Poincar\'e--Miranda theorem, there must exist $(x_0,y_0) \in [\epsilon_1, 1]\times [\epsilon_2,1]$ such that $F(x_0,y_0) = (0,0)$.
 \end{proof}

%%%%

\subsection{Existence of multiple-population endemic steady states} 

We now examine endemic steady states, in which there are infected individuals at steady state in both the urban and rural patches.

We start by proving the following lemma.

\begin{lemma}
Suppose that $\mathcal{R}_{0u} > 1+\dfrac{\dur}{\mm+\gu}$ and $\mathcal{R}_{0r} > 1+\dfrac{\dru}{\mr+\gr}$. It then follows that exists at least one endemic state, for which both $I_u^*>0$ and $I_r^*>0$. This is, there are infected individuals at steady state in both the urban and the rural patches.
\end{lemma}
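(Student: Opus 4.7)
The plan is to apply Proposition~\ref{prop:miranda} to functions $F_1, F_2$ on $[0,1]^2$ obtained by reducing the six steady-state equations to two equations in the normalized infected fractions $x = I_u^*/N_u^*$ and $y = I_r^*/N_r^*$. The asymptotic populations $N_u^*, N_r^*$ from Section~\ref{sec:model} satisfy the useful identity $\dur N_u^* = \dru N_r^*$, which makes most coupling terms simplify cleanly.

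First, I would eliminate $S_u, S_r, \widetilde{S}_u, \widetilde{S}_r$. The steady-state equations $\dot S_u = \dot S_r = 0$ (with $N_u^*, N_r^*$ substituted) form a linear system in $(S_u, S_r)$ whose determinant is bounded below by $\mm\mr + \mm\dru + \mr\dur > 0$ uniformly on $[0,1]^2$, so it is uniquely solvable; a parallel system determines $(\widetilde{S}_u, \widetilde{S}_r)$ from $\dot{\widetilde{S}}_u = \dot{\widetilde{S}}_r = 0$. All four quantities are thus smooth rational functions of $(x,y)$ on the closed square. Substituting them into $\dot I_u/N_u^*$ and $\dot I_r/N_r^*$ and using $\dur N_u^*=\dru N_r^*$ defines
\begin{align*}
F_2(x,y) &= x\big[\bu \tfrac{S_u}{N_u^*} + \pu \tfrac{\widetilde{S}_u}{N_u^*} - (\mm+\gu+\dur)\big] + \dur\, y,\\
F_1(x,y) &= y\big[\br \tfrac{S_r}{N_r^*} + \pr \tfrac{\widetilde{S}_r}{N_r^*} - (\mr+\gr+\dru)\big] + \dru\, x.
\end{align*}
A zero of $F$ in $(0,1)^2$ corresponds to an endemic steady state: the reconstructed aggregates $\widehat{N}_j := S_j + I_j + \widetilde{S}_j$ are forced to equal $N_j^*$ at a simultaneous zero because summing the six steady-state equations produces a nonsingular linear system for $(\widehat{N}_u, \widehat{N}_r)$ with unique solution $(N_u^*, N_r^*)$, so conservation is automatic.

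The verification of the hypotheses of Proposition~\ref{prop:miranda} splits into easy and hard pieces. The immediate checks are $F(0,0) = (0,0)$, $F_1(x, 0) = \dru\, x > 0$ for $x \in (0, 1]$, and $F_2(0, y) = \dur\, y > 0$ for $y \in (0, 1]$. Using $S_j(0,0) = N_j^*$ and $\widetilde{S}_j(0, 0) = 0$, the derivatives at the origin simplify to
\[\frac{\partial F_2}{\partial x}(0, 0) = (\mm+\gu)\Big[\mathcal{R}_{0u} - 1 - \tfrac{\dur}{\mm+\gu}\Big], \quad \frac{\partial F_1}{\partial y}(0, 0) = (\mr+\gr)\Big[\mathcal{R}_{0r} - 1 - \tfrac{\dru}{\mr+\gr}\Big],\]
both strictly positive precisely by the lemma's hypotheses.

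The main obstacle is the remaining edge conditions $F_1(x, 1) < 0$ and $F_2(1, y) < 0$ for all $x, y \in [0, 1]$. On these edges the reconstructed configuration is \emph{not} physical (for instance, one checks $\widetilde{S}_r(x,1)\neq(1-y)N_r^*-S_r(x,1)$), so the inequalities cannot be read off from a conservation argument and must be proved by direct algebra. Using Cramer's rule at $y = 1$ one writes $S_r(x,1)$ and $\widetilde{S}_r(x,1)$ explicitly in terms of $S_u(x,1), \widetilde{S}_u(x,1)$; plugging in gives $F_1(x, 1)$ as a sum of a manifestly negative part (containing terms like $-\mr(\mr+\dru)/(\br+\mr+\dru)$ and $-\dru(1-x)$) and a nonnegative coupling $\mathcal{C}(x)$ linear in $S_u(x,1)$ and $\widetilde{S}_u(x,1)$. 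Showing that $\mathcal{C}(x)$ is dominated requires bounding $S_u$ and $\widetilde{S}_u$ from above on this edge, which is the delicate step; the identity $\dur N_u^* = \dru N_r^*$ is once again the crucial cancellation. With these bounds in hand, the analogous reasoning at $x = 1$ handles $F_2(1, y)$, and Proposition~\ref{prop:miranda} then delivers an $(x_0, y_0) \in (0,1)^2$ with $F(x_0, y_0) = (0,0)$, yielding the claimed endemic steady state with $I_u^* = x_0 N_u^* > 0$ and $I_r^* = y_0 N_r^* > 0$.
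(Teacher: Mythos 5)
Your overall strategy is the paper's: reduce the steady-state problem to two equations in the infected fractions on $[0,1]^2$ and invoke Proposition \ref{prop:miranda}, with the hypotheses $\mathcal{R}_{0u}>1+\dur/(\mm+\gu)$ and $\mathcal{R}_{0r}>1+\dru/(\mr+\gr)$ entering only through the signs of the partial derivatives at the origin (your computation of those derivatives, and of the inner-edge signs $F_1(x,0)=\dru x$ and $F_2(0,y)=\dur y$, is correct). The gap is exactly where you flag it: the outer-edge conditions $F_2(1,y)<0$ and $F_1(x,1)<0$. You defer them as ``the delicate step,'' but in your particular reduction they are not merely delicate --- they can be false. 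Because you substitute $i_u=x$, $i_r=y$ directly into the $\widetilde{S}$-equations and solve only for $(\widetilde{s}_u,\widetilde{s}_r)$, the conservation law $s_j+i_j+\widetilde{s}_j=1$ fails off the zero set of $F$, so $\widetilde{s}_u$ is not bounded by $1$. Explicitly, at $x=y=1$ your system gives
\[
\widetilde{s}_u=\frac{\gu+\dur\,\widetilde{s}_r}{\pu+\mm+\dur},\qquad
\widetilde{s}_r=\frac{\gr+\dru\,\widetilde{s}_u}{\pr+\mr+\dru},
\]
and if $\gr\gg\pr+\mr$ (which the lemma's hypotheses do not exclude, since they only force $\br>\gr+\mr+\dru$), then $\widetilde{s}_r$ is of order $\gr/(\pr+\mr+\dru)\gg 1$, the coupling term $\dur\widetilde{s}_r$ makes $\pu\widetilde{s}_u$ arbitrarily large, and $F_2(1,1)=\bu s_u+\pu\widetilde{s}_u-(\mm+\gu+\dur)+\dur>0$. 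So the Poincar\'e--Miranda hypotheses cannot be verified for your $F$ in general.

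The paper avoids this by a slightly different reduction: in the auxiliary linear system \eqref{eq:systemSS2} the quantities $I_u$ and $I_r$ are kept as unknowns (only the bilinear factors $I_j/N_j$ are frozen at the parameters $(B_u,B_r)$), and the fixed-point conditions $i_u=B_u$, $i_r=B_r$ are imposed afterwards through $G_1$ and $G_2$. Summing the three urban (resp.\ rural) equations of \eqref{eq:linearSystem} then forces $s_j+i_j+\widetilde{s}_j=1$ for \emph{every} $(B_u,B_r)\in[0,1]^2$, and since $s_u>0$ and $i_u,\widetilde{s}_u\ge 0$ follow from the explicit formulas (a nonnegativity/M-matrix argument for the $\widetilde{s}$ subsystem), one gets $G_1(1,B_r)=i_u(1,B_r)-1\le -s_u(1,B_r)<0$ essentially for free. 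To salvage your version you would need either to switch to that reduction or to restrict to a parameter regime in which the upper bounds on $\widetilde{s}_u$ and $\widetilde{s}_r$ that your argument requires can actually be proved; as written, the proof is incomplete and the missing step cannot be filled without changing the setup.
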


\begin{proof}
We deduce the existence of a solution of the following nonlinear system of algebraic equations:
\begin{equation} \label{eq:systemSS}
	\begin{array}{rcl}
0 &=& \mu_u N_u - \beta_u \dfrac{I_u}{N_u} S_u -\mu_u S_u + \dru S_r - \dur S_u\,,\\
0 &=& \beta_u \dfrac{I_u}{N_u} S_u - (\mu_u+\gamma_u)I_u + \rho_u \dfrac{I_u}{N_u} \widetilde{S}_u +\dru I_r - \dur I_u\,,\\ 
0 &=& \gamma_u I_u - \rho_u \dfrac{I_u}{N_u} \widetilde{S}_u  - \mu_u \widetilde{S}_u +\dru \widetilde{S}_r - \dur \widetilde{S}_u\,,\\
0 &=& \mu_r N_r - \beta_r \dfrac{I_r}{N_r} S_r -\mu_r S_r + \dur S_u - \dru S_r\,,\\
0 &=& \beta_r \dfrac{I_r}{N_r} S_r - (\mu_r+\gamma_r)I_r + \rho_r \dfrac{I_r}{N_r} \widetilde{S}_r +\dur I_u - \dru I_r\,,\\ 
0 &=& \gamma_r I_r - \rho_r \dfrac{I_r}{N_r} \widetilde{S}_r  - \mu_r \widetilde{S}_r +\dur \widetilde{S}_u - \dru \widetilde{S}_r\,.\\
	\end{array}
\end{equation}

Given two parameters $(B_u,B_r) \in [0,1]^2$, we consider the auxiliary linear system 
\begin{equation} \label{eq:systemSS2}
	\begin{array}{rcl}
0 &=& \mu_u N_u - \beta_u B_u S_u -\mu_u S_u + \dru S_r - \dur S_u\,,\\
0 &=& \beta_u B_u S_u - (\mu_u+\gamma_u)I_u + \rho_u B_u \widetilde{S}_u +\dru I_r - \dur I_u\,,\\ 
0 &=& \gamma_u I_u - \rho_u B_u \widetilde{S}_u  - \mu_u \widetilde{S}_u +\dru \widetilde{S}_r - \dur \widetilde{S}_u\,,\\
0 &=& \mu_r N_r - \beta_r B_r S_r -\mu_r S_r + \dur S_u - \dru S_r\,,\\
0 &=& \beta_r B_r S_r - (\mu_r+\gamma_r)I_r + \rho_r B_r \widetilde{S}_r +\dur I_u - \dru I_r\,,\\ 
0 &=& \gamma_r I_r - \rho_r B_r \widetilde{S}_r  - \mu_r \widetilde{S}_r +\dur \widetilde{S}_u - \dru \widetilde{S}_r\,.\\
	\end{array}
\end{equation}

Suppose that $(S_u,I_u,\widetilde{S}_u,S_r,I_r,\widetilde{S}_r)$ is a solution of the linear system \eqref{eq:systemSS2}. This solution also satisfies the nonlinear system \eqref{eq:systemSS} if $I_u/N_u = B_u$ and $I_r/N_r=B_r$. By adding all of the equations in \eqref{eq:systemSS2}, we see 
that $\dur N_u = \dru N_r$. We then rescale variables in \eqref{eq:systemSS2} by substituting 
\begin{align*}
	s_u &=\dfrac{S_u}{N_u}\,,\quad s_r =\dfrac{S_r}{N_r}\,,\\
i_u &=\dfrac{I_u}{N_u}\,, \quad  i_r =\dfrac{I_r}{N_r}\,,\\
	\widetilde{s}_u &=\dfrac{\widetilde{S}_u}{N_u}\,,\quad  \widetilde{s}_r =\dfrac{\widetilde{S}_r}{N_r}
\end{align*}
to obtain a linear system of algebraic equations with parameters $B_u$ and $B_r$. This system is
\begin{subequations} \label{eq:linearSystem}
\begin{align}
	\mu_u -\left(\beta_u B_u +\mu_u+\dur\right ) s_u + \dur s_r  &=  0\,,  \label{eq:su}\\
\beta_u B_u s_u - (\mu_u+\gamma_u + \dur ) i_u +  \rho_u  B_u \widetilde{s}_u +  \dur i_r &=  0\,,\label{eq:iu}\\ 
\gamma_u i_u - \rho_u  B_u \widetilde{s}_u - \left(\mu_u  + \dur  \right)\widetilde{s}_u + \dur \widetilde{s}_r &= 0\,,\label{eq:ru}\\
\mu_r - \left(\beta_r B_r +\mu_r+\dru\right ) s_r + \dru s_u  &= 0\,,\label{eq:sr}\\
\beta_r  B_r s_r - (\mu_r+\gamma_r + \dru) i_r  +\rho_r  B_r \widetilde{s}_r + \dru i_u &= 0\,,\label{eq:ir}\\ 
\gamma_r i_r - \rho_r  B_r \widetilde{s}_r - \left( \mu_r  + \dru  \right)\widetilde{s}_r + \dru \widetilde{s}_u &= 0\label{eq:rr}\,.
\end{align} 
\end{subequations}
We now solve Eqs.~\eqref{eq:su} and \eqref{eq:sr} to obtain
\begin{align*}
	s_u(B_u,B_r) &= \dfrac{\beta_r  \mu_u B_r+\dur \mu_r+\dru \mu_u+\mu_r\mu_u}{\left(\beta_u B_u +\mu_u+\dur\right )\left(\beta_r B_r +\mu_r+\dru\right )-\dru\dur}\,,\\
s_r(B_u,B_r) &= \dfrac{\beta_u  \mu_r B_u+\dur \mu_r+\dru \mu_u+\mu_r\mu_u}{\left(\beta_u B_u +\mu_u+\dur\right )\left(\beta_r B_r +\mu_r+\dru\right )-\dru\dur}\,.
\end{align*}
Using Eqs.~\eqref{eq:iu}, \eqref{eq:ru}, \eqref{eq:ir}, \eqref{eq:rr}, we solve for $i_u$ and $i_r$ to obtain
\begin{align*}
	i_u(B_u,B_r) &= \dfrac{1}{\Delta}\left( s_r \br \dur B_r  \Delta_{i_u}^{(1)} + s_u \bu B_u \Delta_{i_u}^{(2)} \right)\,,\\
i_r(B_u,B_r) &= \dfrac{1}{\Delta}\left( s_u \bu \dru B_u \Delta_{i_r}^{(1)} + s_r \br B_r \Delta_{i_r}^{(2)} \right)\,,
\end{align*}
where
\begin{align*}
	\Delta_{i_u}^{(1)} &= \dur \mr +  \dru \mm + \mm \mr  + \pr  (\mm + \dur) B_r +  \pu (\gr+ \mr+ \dru) B_u  +  \pu \pr B_u B_r\,,\\
\Delta_{i_u}^{(2)} &=  (\dru + \gr + \mr) (\dur \mr + \dru \mm + \mr \mu) + (\dru + \mr) (\dur + \mm) \pr B_r + \\
 & \hspace{1cm} + (\dru + \mr) (\dru + \gr + \mr) \pu B_u + (\dru + \mr) \pu \pr B_u B_r\,,\\
\Delta_{i_r}^{(1)} &= \dur \mr +  \dru \mm + \mm \mr  + \pu  (\mr + \dru) B_u +  \pr (\gu+ \mm+ \dur) B_r  +  \pu \pr B_u B_r\,,\\
\Delta_{i_r}^{(2)} &=  (\dur + \gu + \mu) (\dur \mr + \dru \mm + \mr \mu) + (\dur + \mm) (\dru + \mr) \pu B_u + \\
 & \hspace{1cm} + (\dur + \mm) (\dur + \gu + \mm) \pr B_r + (\dur + \mm) \pu \pr B_u B_r\,,\\
\Delta &= (\dur \mr + \dru \mm+\mr\mm) ( (\dru + \gr + \mr + B_r \pr) (\dur + \gu + \mm + B_u \pu) -\dur\dru)\,.
\end{align*}
One can write similar expressions for $\widetilde{s}_u(B_u,B_r)$ and $\widetilde{s}_r(B_u,B_r)$. The solution $(s_u,i_u,\widetilde{s}_u,s_r,i_r,\widetilde{s}_r)$ of \eqref{eq:linearSystem} satisfies the nonlinear algebraic system \eqref{eq:systemSS} if and only if $i_u(B_u,B_r)=B_u$ and $i_r(B_u,B_r)=B_r$. We then define
\begin{align*}
	G_1(B_u,B_r) &= i_u(B_u,B_r)-B_u\,,\\
G_2(B_u,B_r) &= i_r(B_u,B_r)-B_r\,.
\end{align*}
We will show that the system 
\begin{equation}\label{eq_systemG1G2}
G_1(B_u,B_r) = G_2(B_u,B_r)=0
\end{equation}
has at least one solution $(B_u,B_r)\in (0,1)^2$. This implies the existence of a solution of the nonlinear algebraic system \eqref{eq:systemSS} with $i_u>0$ and $i_r>0$. This solution corresponds to a steady state with a nonzero infected population in both the urban and the rural patches.

Consider the surface $z_1 = G_1(B_u,B_r)$. We have that $G_1(0,B_r) = i_u(0,B_r) \geq 0$ for $B_r\in [0,1]$, with equality when $B_r=0$. Because $G_1(0,0)=0$, we need to cut the origin to guarantee the existence of a positive solution of Eq.~\eqref{eq_systemG1G2}. A straightforward calculation yields
\begin{align*}
    \dfrac{\partial G_1}{\partial B_u}(0,0) &= \dfrac{-\dur (\gr + \mr) + (\bu -\gu - \mm) (\dru + \gr + \mr) }{\dur (\gr + \mr) + (\dru + \gr + \mr) (\gu + \mm)}\\
    &> \dfrac{\dur \dru }{\dur (\gr + \mr) + (\dru + \gr + \mr) (\gu + \mm)} >0\,,
\end{align*}
because $\bu>\gu+\mm+\dur$ by hypothesis. Analogously, we compute that 
\begin{equation}
	\dfrac{\partial G_2}{\partial B_r}(0,0)>0\,.
\end{equation}

We also compute that $G_1(1,B_r) = i_u(1,B_r) - 1<0$ for $B_r\in [0,1]$ and that $G_2(B_u,1)<0$ for $B_u\in [0,1]$. By Proposition \ref{prop:miranda}, we conclude that there exists $(B_u,B_r) \in (\epsilon_1,1)\times (\epsilon_2,1)$ such that $G_1(B_u,B_r) = G_2(B_u,B_r)=0$. Therefore, the nonlinear algebraic system \eqref{eq:systemSS} has a solution that corresponds to a steady state with positive values for both $I_u^*$ and $I_r^*$.
\end{proof}

\begin{figure}[htb!]
\subfloat[The curve $G_1(B_u,B_r)=0$ has a tangent with slope $M_u < 0$ at the origin.  \label{fig:lemmaEnd}]{\includegraphics[width=0.48\linewidth]{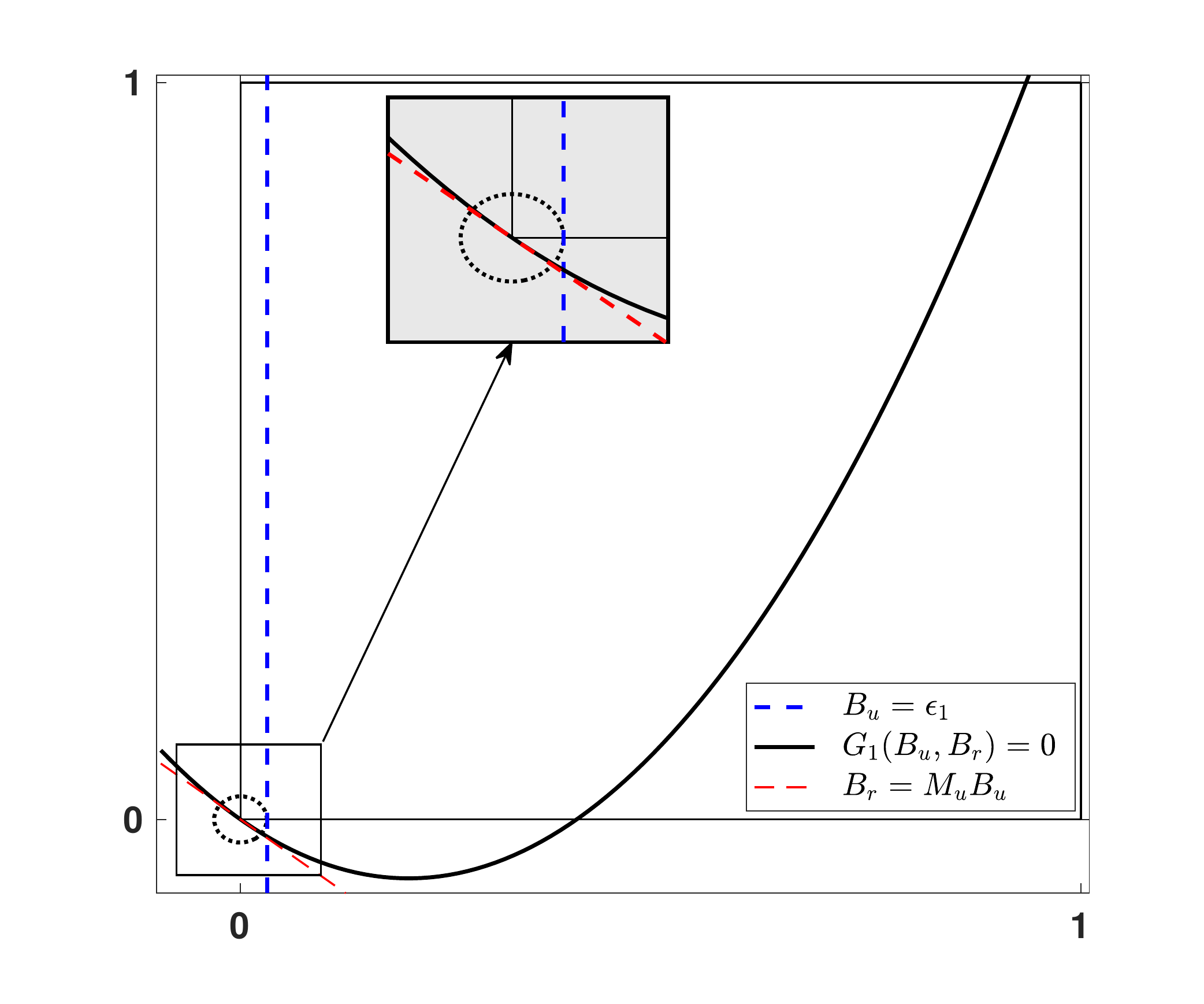}}
\hfill
\subfloat[The curves $G_1(B_u,B_r)=0$ and $G_2(B_u,B_r)=0$ intersect at two points with positive coordinates.
\label{fig:remarkEnd}]{\includegraphics[width=0.48\linewidth]{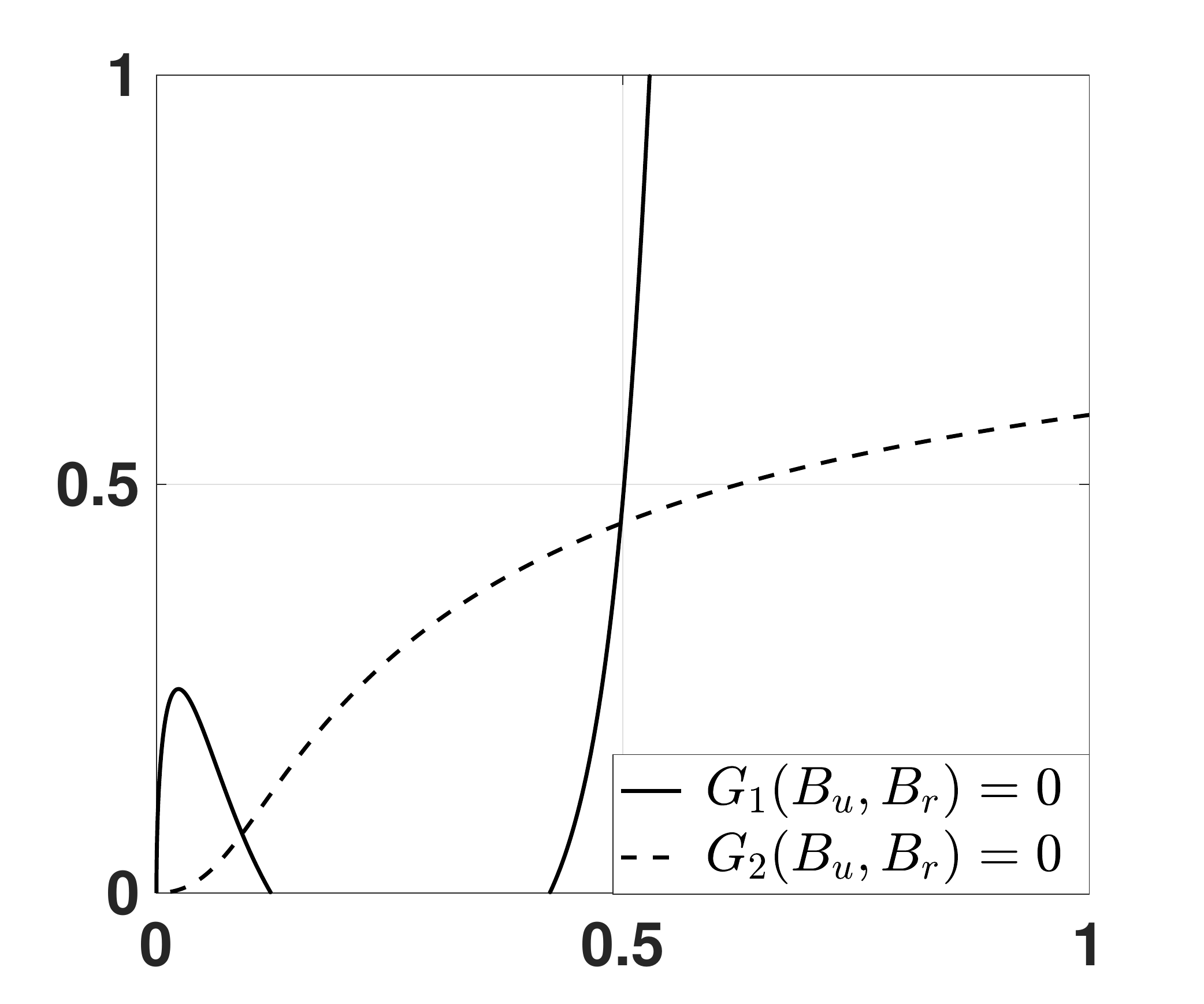}}
\caption{Endemic steady states in urban and rural environments. In these steady states, an infected population persists in both the rural and the urban patches.
}
\end{figure}

\begin{remark} {\rm \label{rem:severalPoints}
Using numerical computations, we can observe the existence of two or more distinct endemic steady states. However, we need to explore them further to characterize them; see Figure \ref{fig:remarkEnd} and Examples \ref{ex:3} and \ref{eq:ninePts} in Section \ref{sec:sims}.
}
\end{remark}

%%%%%

\section{Examples} \label{sec:sims}

We now present some numerical simulations of the dynamical system \eqref{eq:sys} for a variety of parameter values. Using these examples, we illustrate that population movement strongly influences how a disease can propagate.

\begin{example} {\rm \label{ex:1} %file ex1_dur0.m

We first explore the behavior of our model \eqref{eq:sys} when $\dur=0$ and $\dru\neq 0$, which describes movement in one direction (specifically, from the rural patch to the urban one). In some countries, it is common for individuals in rural areas to move to urban areas for work \cite{coffee2015}. This is a type of short-term mobility. We consider the following parameter values:
\begin{equation*}
	\begin{array}{cccc}
\mm = 1/(365\cdot 80)\,, & \pu = 0.08\,, & \gu = 0.01\,, & \bu = 0.03\,,\\
\mr = 1/(365\cdot 70)\,, & \pr = 0.04\,, & \gr = 0.01\,, & \br = 0.02\,,
\end{array}
\end{equation*}
with initial conditions 
\begin{equation*}
	\begin{array}{ccc}
S_{u0} = 999\,, & I_{u0} = 1\,, & \widetilde{S}_{u0} = 0\,,\\
S_{r0} = 300\,, & I_{r0} = 0\,, & \widetilde{S}_{r0} = 0\,.
\end{array}
\end{equation*}
In this case,
\begin{align*}
	\mathcal{R}_{0u} > 1 \quad \text{and} \quad \mathcal{R}_{0r} \approx 2 \,.
\end{align*}
In Figure \ref{fig:ex1_dur_change}, we show $I_u(t)$ as we vary $\dru$. We show the solutions for both the urban and the rural patches when $\dru = 0$ (i.e., when there is no movement) and $\dru = 0.01$ in Figure \ref{fig:ex1}. We observe that movement in one direction increases the number of infected individuals and that larger values of $\dru$ affects only the approach speed to the steady state.

\begin{figure}[htb!]
\centering%no!\hfill
\includegraphics[width=.8\linewidth]{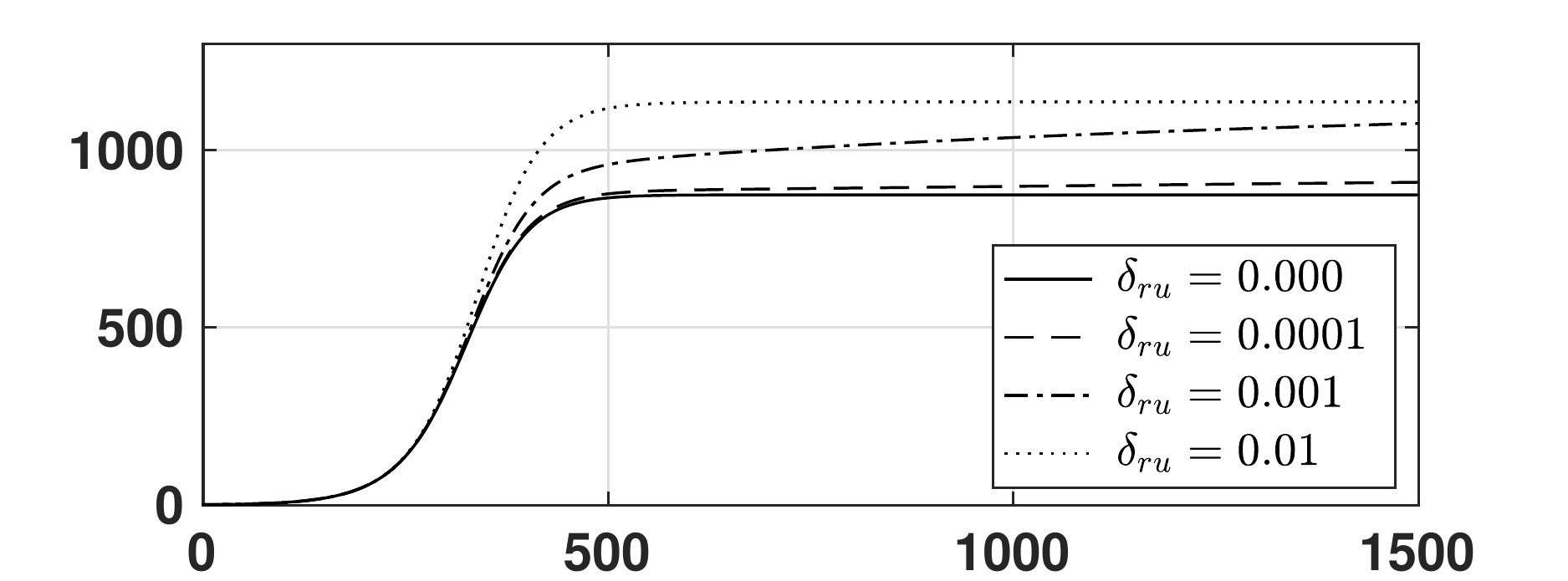}
\caption{The number $I_u$ of infected individuals in the urban patch as a function of time (in days) for $\dur = 0$ and different values of $\dru$; see Example \ref{ex:1}. 
 \label{fig:ex1_dur_change}}
\end{figure}

\begin{figure}[htb!]
\centering
\subfloat[Urban patch with $\dru=0$ \label{fig:ex1_u1}]{\includegraphics[width=0.5\linewidth]{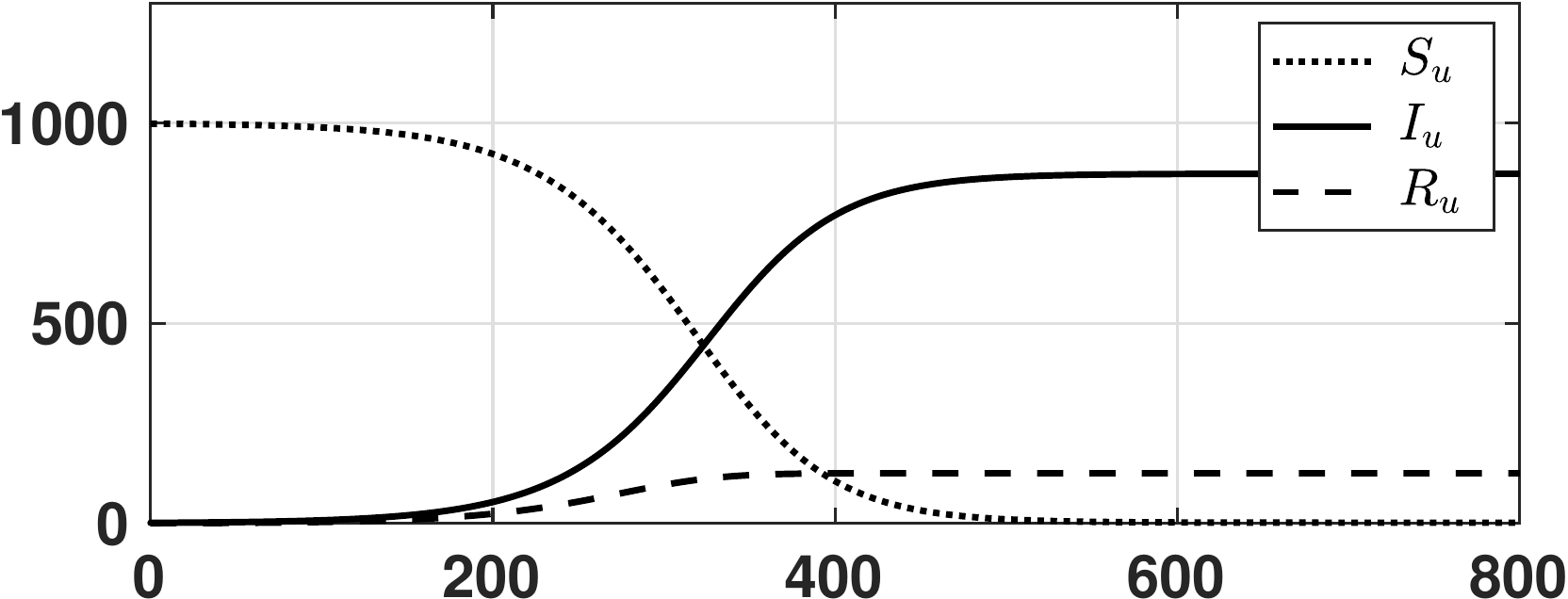}}
\hfill
\subfloat[Rural patch with $\dru=0$ \label{fig:ex1_r1}]{\includegraphics[width=0.5\linewidth]{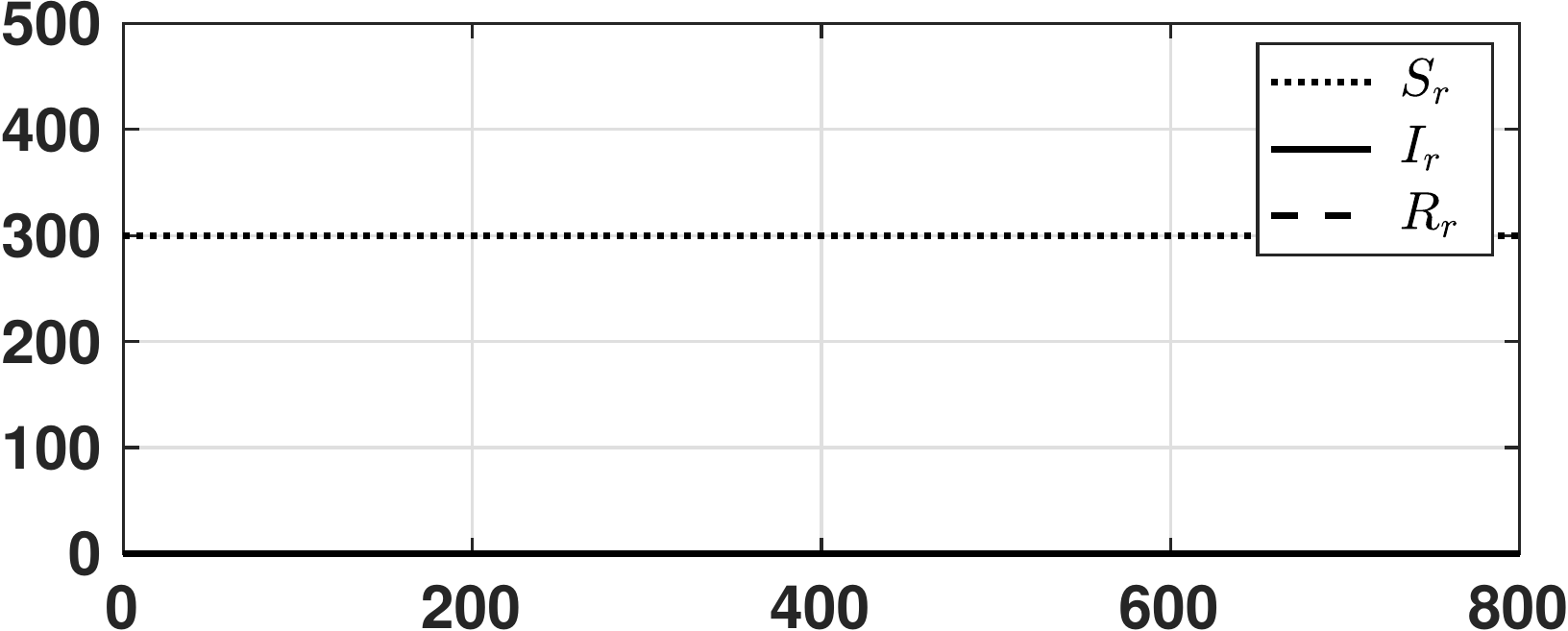}}
\hfill
\subfloat[Urban patch with $\dru=0.01$ \label{fig:ex1_u2}]{\includegraphics[width=0.5\linewidth]{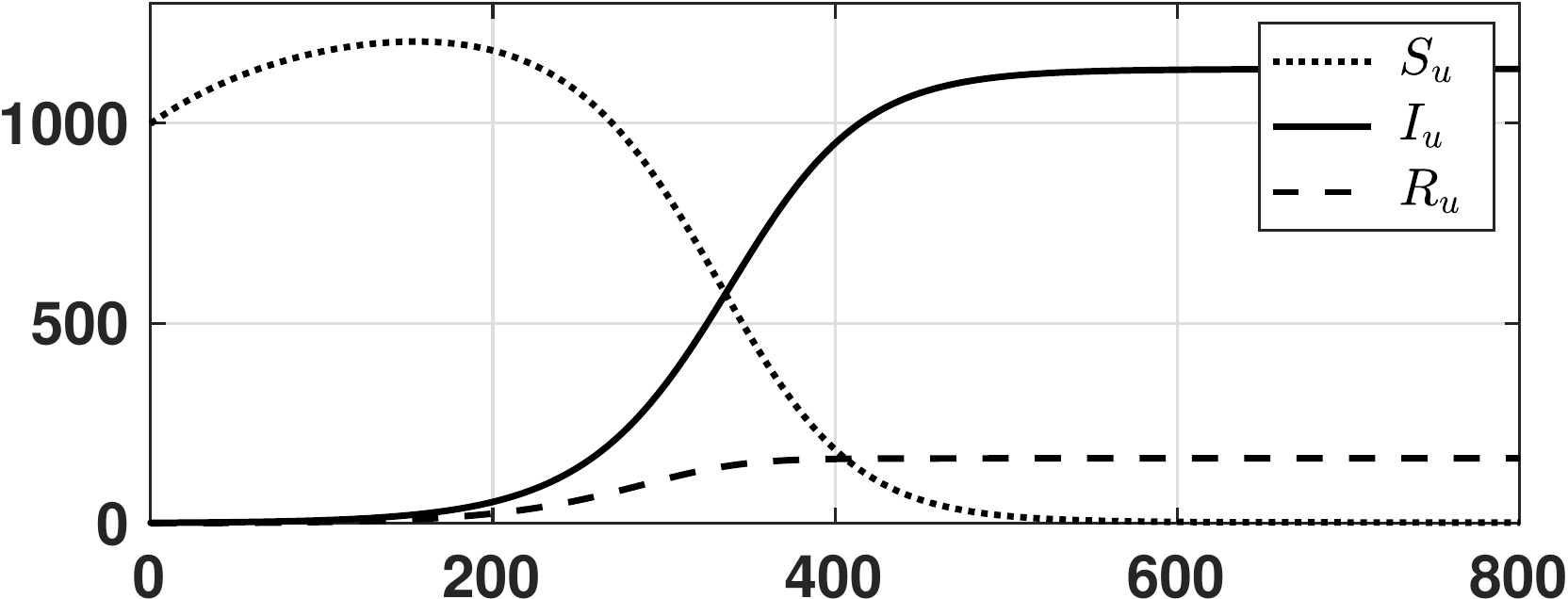}}
\hfill
\subfloat[Rural patch with $\dru=0.01$ \label{fig:ex1_r2}]{\includegraphics[width=0.5\linewidth]{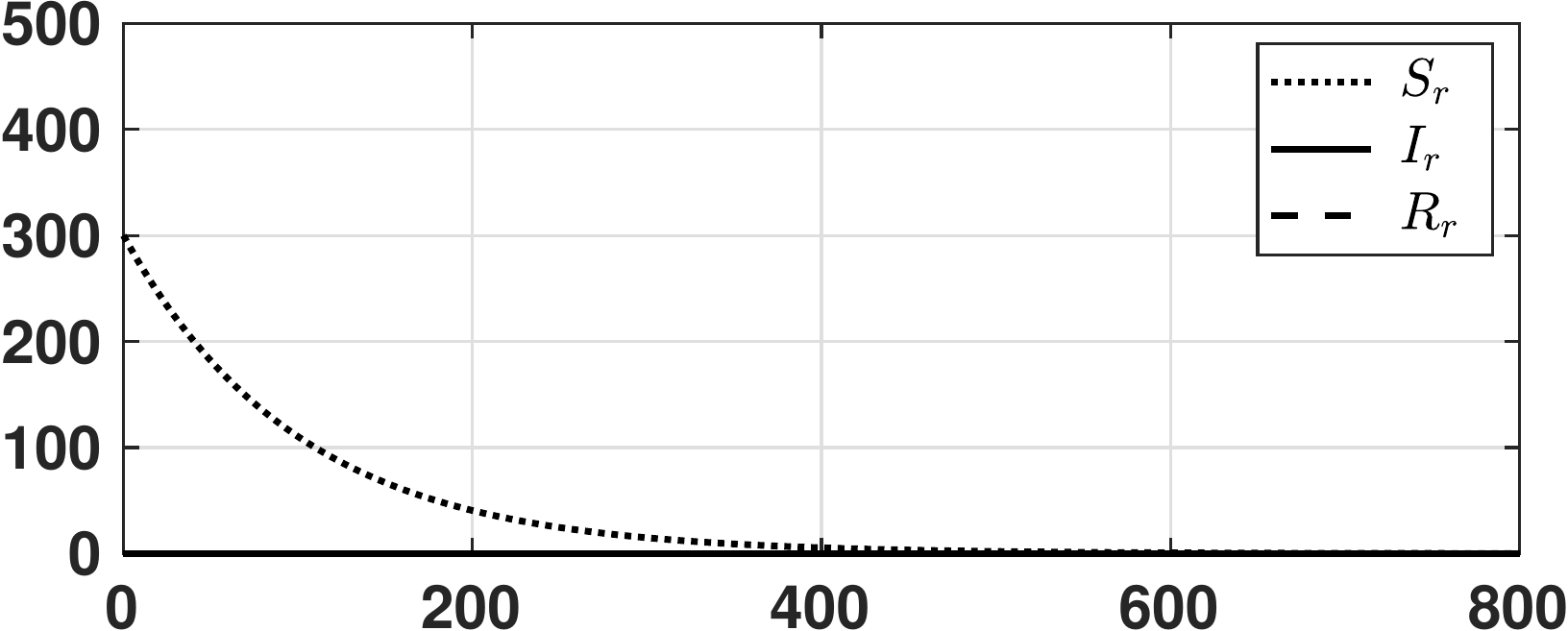}}
\caption{Effect of movement in just one direction (with $\dur=0$) for different rates of immigration $\dru$; see Example \ref{ex:1}. 
\label{fig:ex1}}
\end{figure}

}
\end{example}

\begin{example} \label{ex:2} % file ex4.m
{\rm
We now compare the effects of conditions \eqref{eq:condR0a} and \eqref{eq:condR0b} to the standard condition $\mathcal{R}_0<1$ for local asymptotic stability of the disease-free steady state \eqref{free}. We use the parameter values
\begin{equation*}
	\begin{array}{cccc}
\mm = 1/(365\cdot 80)\,, & \pu = 0.08\,, & \gu = 0.01\,, & \bu = 3\cdot 10^{-2}\,,\\
\mr = 1/(365\cdot 70)\,, & \pr = 0.40\,, & \gr = 0.10\,, & \br = 2\cdot 10^{-5}
	\end{array}
\end{equation*}
and initial conditions 
\begin{equation*}
	\begin{array}{ccc}
S_{u0} = 999, & I_{u0} = 1\,, & \widetilde{S}_{u0} = 0\,,\\
S_{r0} = 300, & I_{r0} = 0\,, & \widetilde{S}_{r0} = 0\,.
	\end{array}
\end{equation*}
We calculate that $\mathcal{R}_{0u} \approx 2.9$ and $\mathcal{R}_{0r} \approx 2\cdot 10^{-4}$. Therefore, in the absence of movement, disease persists in the urban patch but dies out in the rural patch; see Figures \ref{fig:ex2_u1} and \ref{fig:ex2_r1}. For $\dur=\dru=0.05$, we calculate that $\dur/(\gu+\mu) \approx 5.98$, so conditions \eqref{eq:condR0} are satisfied. Therefore, the disease-free steady state is locally asymptotically stable; see Figures \ref{fig:ex2_u2} and \ref{fig:ex2_r2}.

In this example, there exists one endemic state, with $(I_u^*/N_u^*, I_r^*/N_r^*) \approx (0.82, 0.76)$. We study two variations:
\begin{enumerate}
    \item[(1)]When the initial number of infected individuals increases, we reach the endemic steady state. See Figure \ref{fig:ex2c1} for an illustration with initial conditions $S_{u0} = 900$ and $I_{u0} = 100$. 
    \item[(2)]When we slightly increase $\bu$ to $\bu = 5.3\cdot 10^{-2}$, condition \eqref{eq:condR0c} is no longer satisfied, and the system has an endemic state with just one initial infected individual in the urban patch; see Figure \ref{fig:ex2c2}. 
\end{enumerate}

\begin{figure}[htb!]
\subfloat[Urban patch \label{fig:ex2_u1}]{\includegraphics[width=0.5\linewidth]{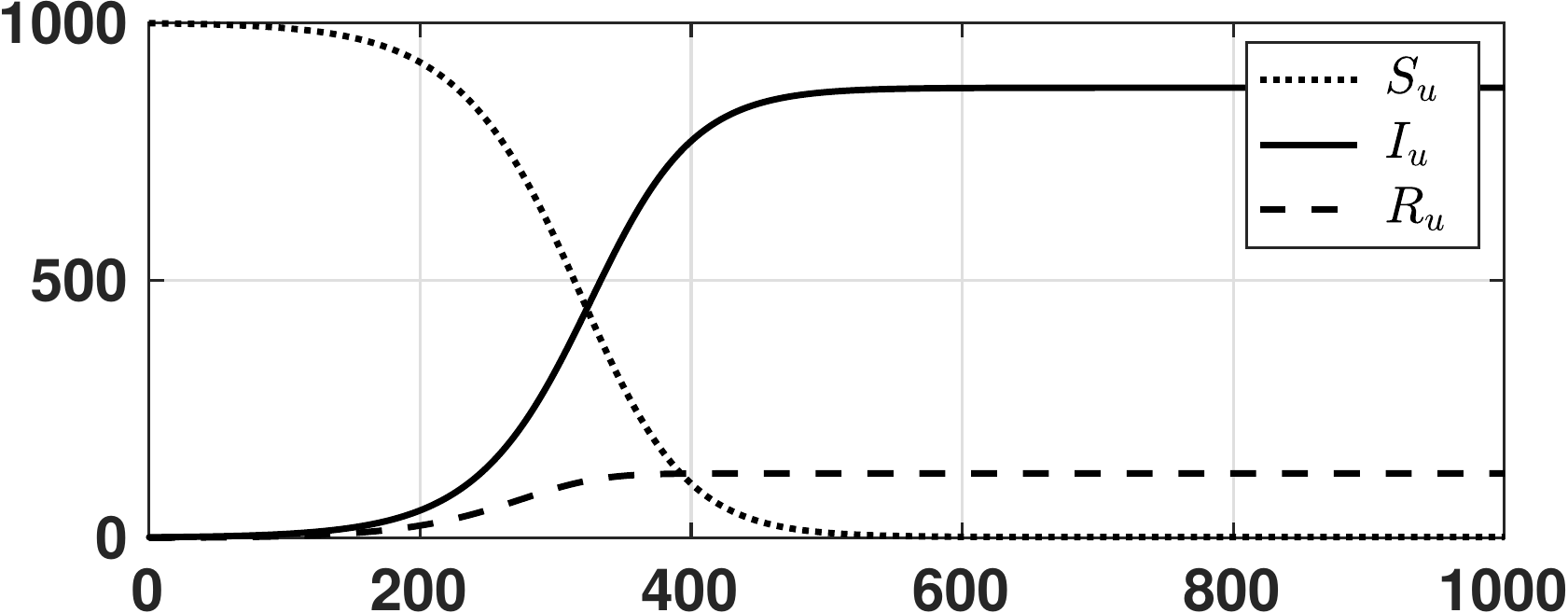}}
\hfill
\subfloat[Rural patch \label{fig:ex2_r1}] {\includegraphics[width=0.5\linewidth]{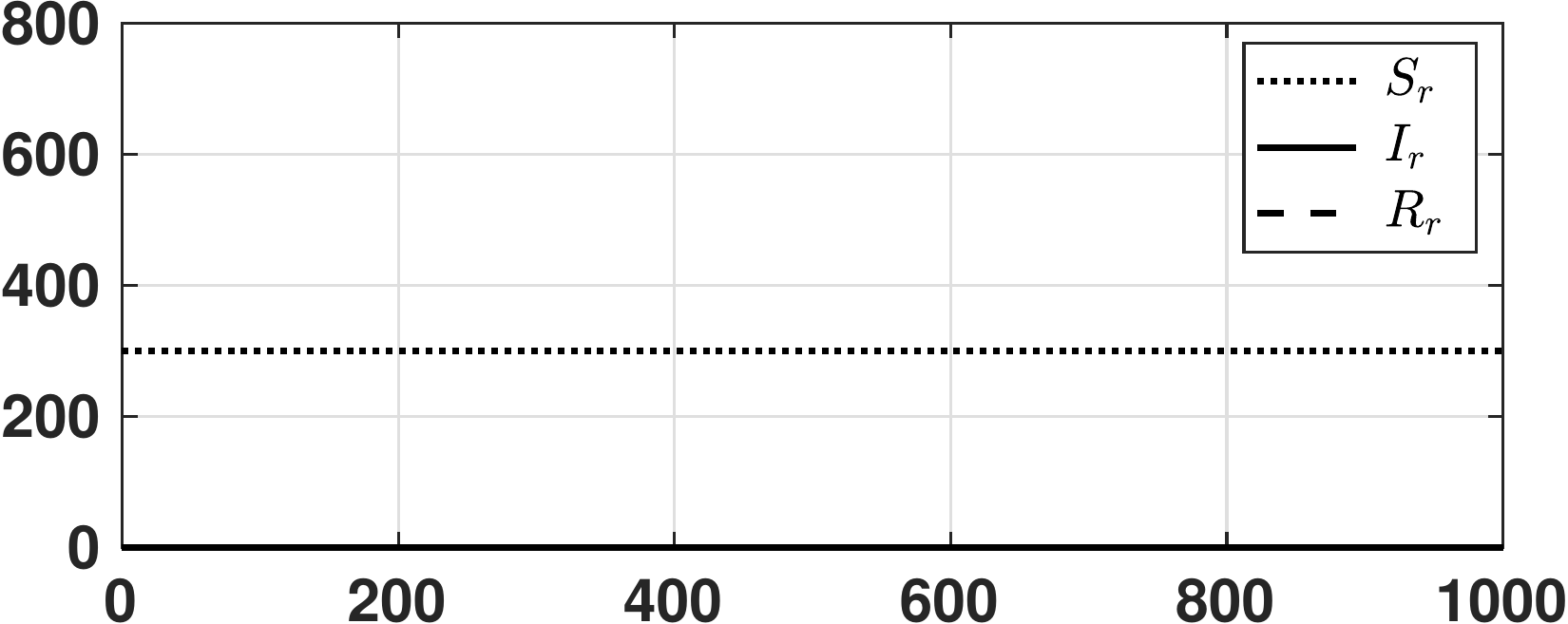}}
\hfill
\subfloat[Urban patch \label{fig:ex2_u2}]{\includegraphics[width=0.5\linewidth]{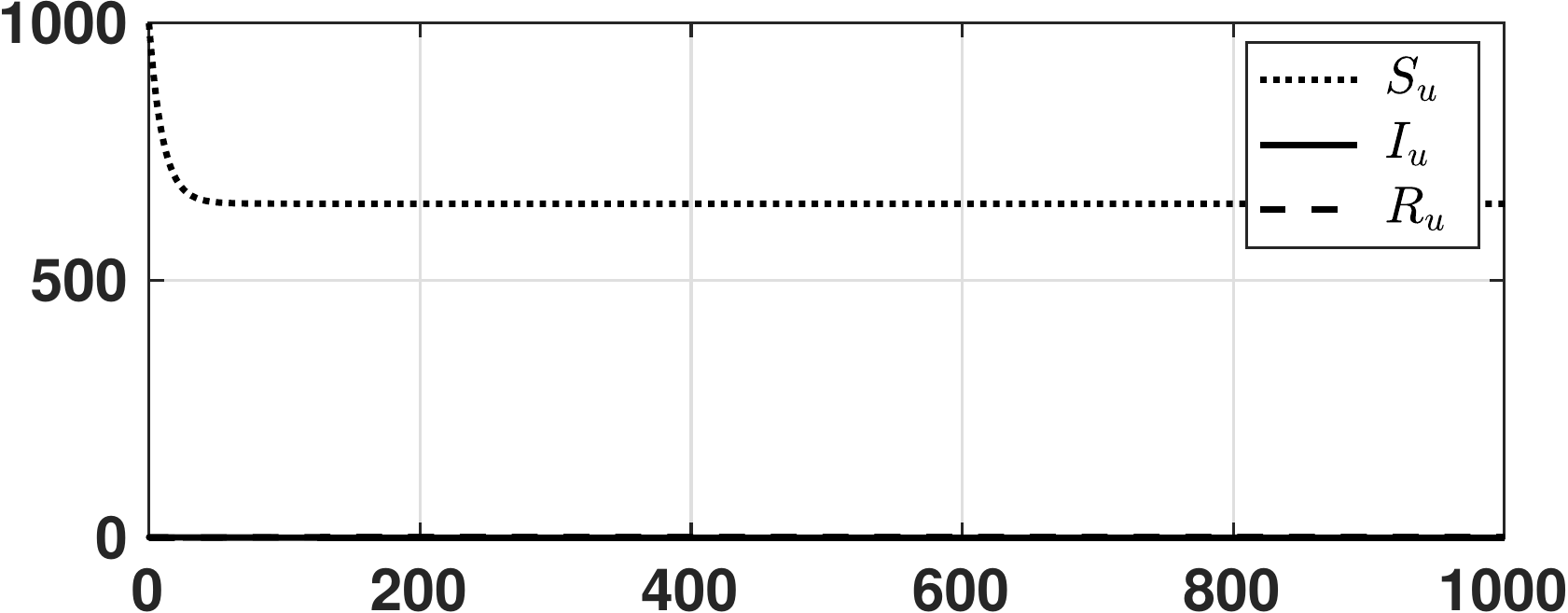}}
\hfill
\subfloat[Rural patch \label{fig:ex2_r2}]{\includegraphics[width=0.5\linewidth]{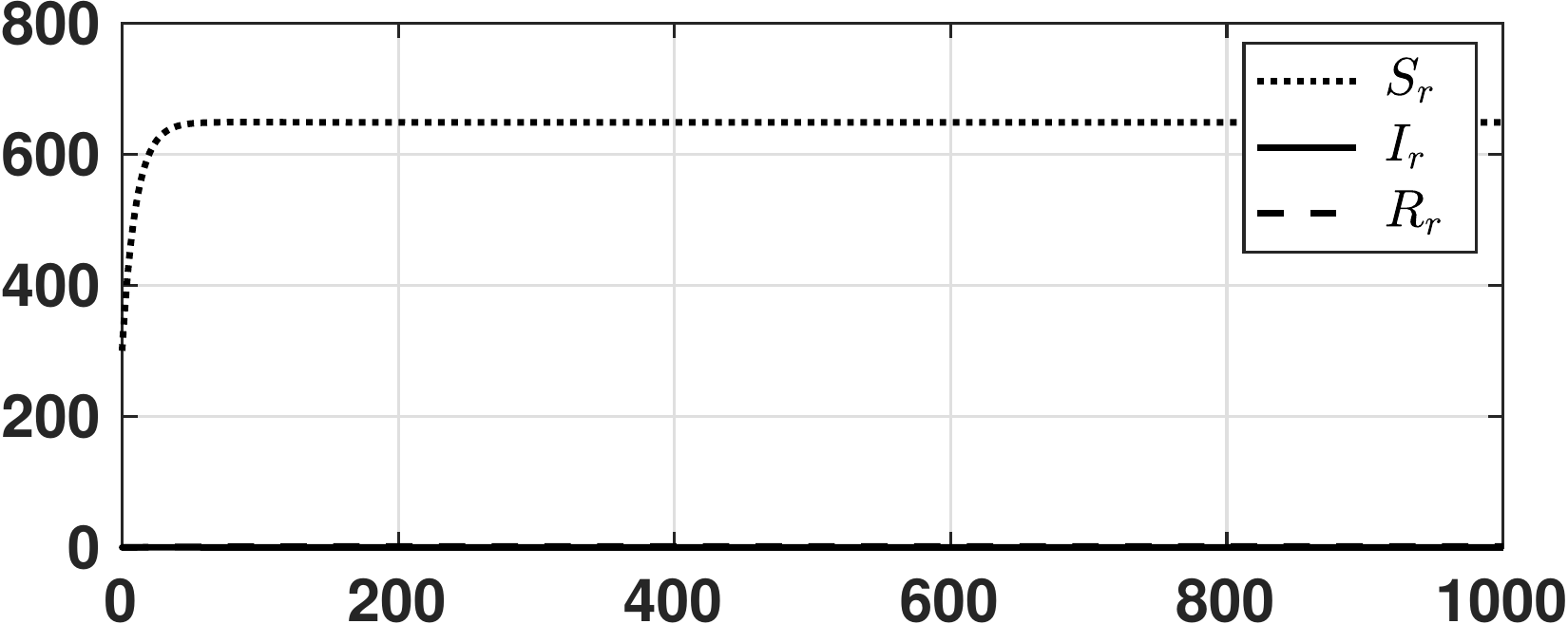}}
\caption{Illustration of the influence of movement on disease dynamics in the urban and rural patches.
(Top) In the absence of movement (i.e., $\dur=\dru = 0$), the urban patch reaches an endemic state. (Bottom) For $\dur=\dru=0.05$, both populations reach a disease-free steady state. In this sense, movement allows the disease to die out; see Example \ref{ex:2}.
 \label{fig:ex2a}}
\end{figure}

\begin{figure}[htb!]
\subfloat[Total number of infected individuals for $I_{0u} = 100$ and $\bu = 3\cdot 10^{-2}$. \label{fig:ex2c1}]{\includegraphics[width=0.49\linewidth]{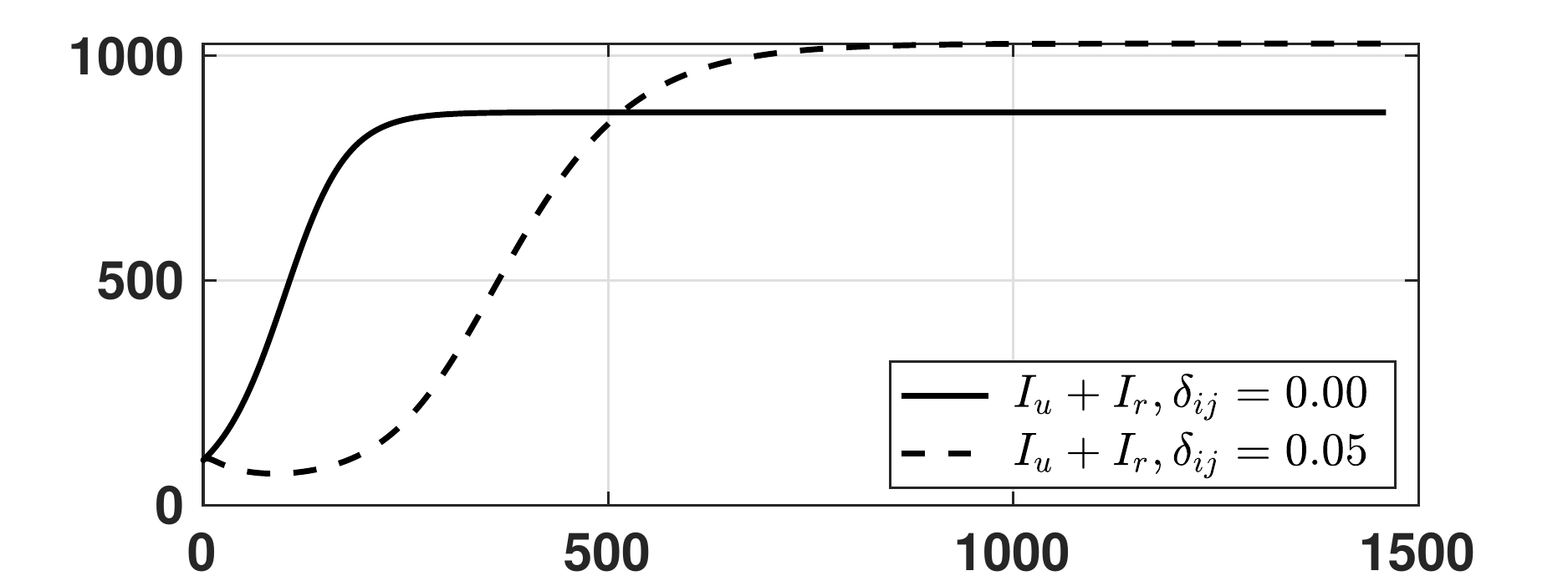}}
\hfill
\subfloat[Total number of infected individuals for $I_{0u} = 1$ and $\bu = 5.3\cdot 10^{-2}$ \label{fig:ex2c2}]{\includegraphics[width=0.49\linewidth]{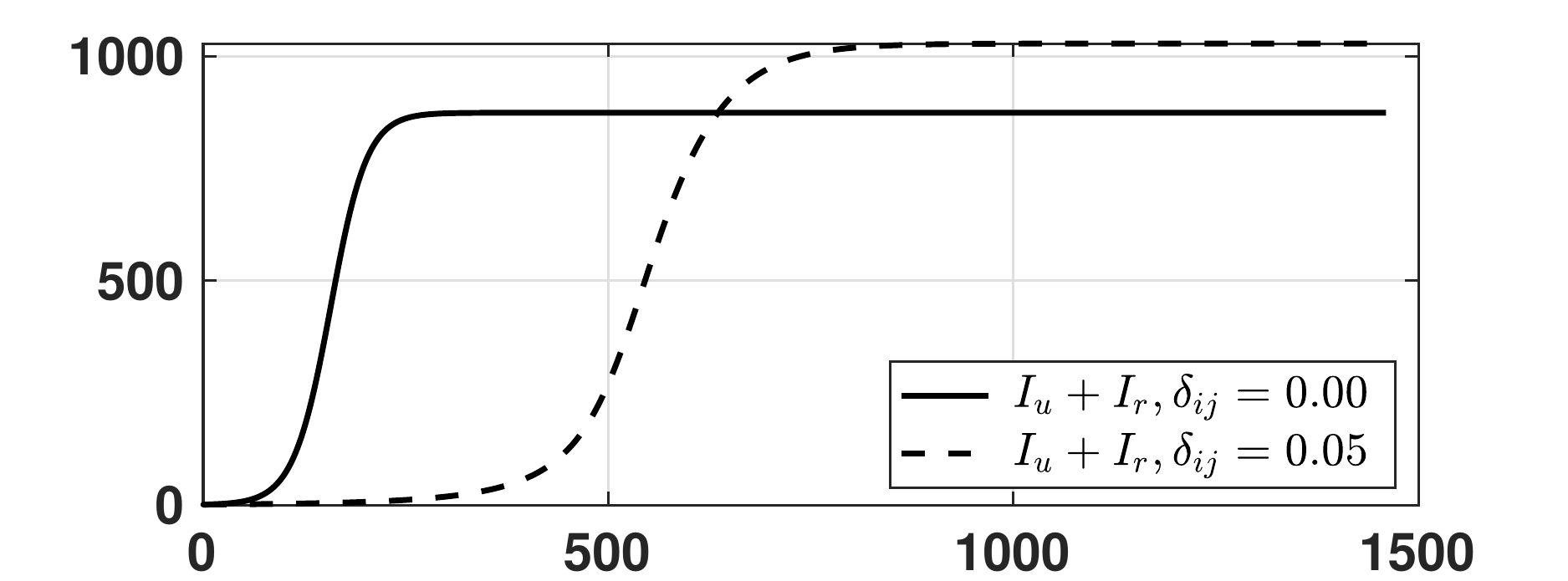}}
\caption{The efect of conditions \eqref{eq:condR0} on infected classes of individuals; see Example \ref{ex:2}. 
\label{fig:ex2c}}
\end{figure}
}
\end{example}

\begin{example} \label{ex:3} % file ex2b.m
{\rm
We now give an example with two endemic states. We use the parameter values
\begin{equation*}
	\begin{array}{ccccc}
\mm = 1/(365\cdot 70)\,, & \pu = 0.80\,, & \gu = 0.15\,, & \bu = 3\cdot 10^{-4}\,, & \dur = 0.04\,,\\
\mr = 1/(365\cdot 70)\,, & \pr = 0.40\,, & \gr = 0.10\,, & \br = 2\cdot 10^{-5}\,, & \dru = 0.05\,.
	\end{array}
\end{equation*}
We fix the initial conditions 
\begin{equation*}
\begin{array}{ccc}
	N_{u0} = 1000\,, & \widetilde{S}_{u0} = 0\,,\\
N_{r0} = 300\,,  & I_{r0} = 0 \,, & \widetilde{S}_{r0} = 0
\end{array}
\end{equation*}
and vary the initial number of infected individuals in the urban patch from $0$ to $N_u$. In this case, there are two endemic states, with $(I_u^*/N_u^*, I_r^*/N_r^*) \approx (0.09, 0.07)$ and $(I_u^*/N_u^*, I_r^*/N_r^*) \approx (0.49, 0.45)$, which we illustrate in Figure \ref{fig:remarkEnd}. We show $I_u^*$ and $I_r^*$ as a function of $I_{0u}$ in Figure \ref{fig:ex3a}. We observe that the steady state $(I_u^*/N_u^*, I_r^*/N_r^*) \approx (0.09, 0.07)$  is unstable.

\begin{figure}[htb!]
\centering%no!\hfill
\includegraphics[width=.7\linewidth]{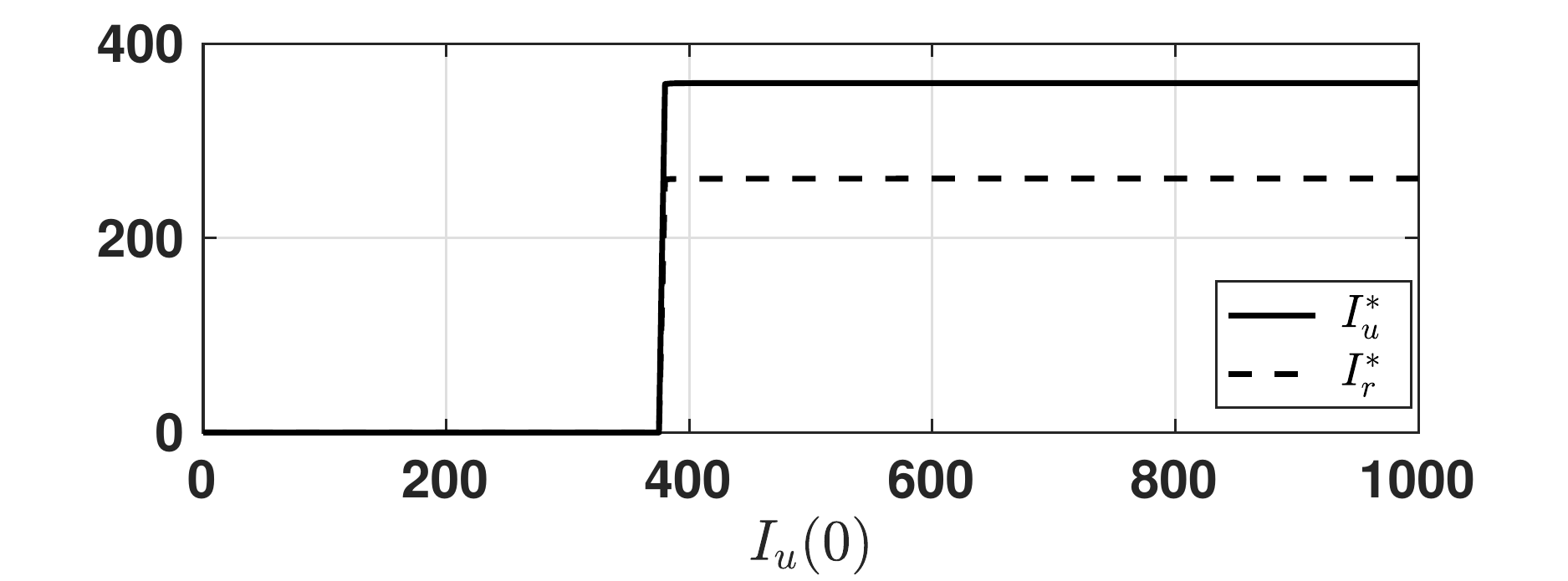}
\caption{The number of infected individuals in the urban and rural patches in an endemic steady state. We observe convergence of $I_u^*$ to a persistent number of infections as a function of $I_{0u}$; see Example \ref{ex:3}. 
\label{fig:ex3a}}
\end{figure}
}
\end{example}

\begin{example} \label{ex:6}% file ex6.m
{\rm
In this example, we explore the dependence on $\mathcal{R}_{0u}$ and $\mathcal{R}_{0r}$ in our model to illustrate the region of local asymptotic stability from Eqs.~\eqref{eq:condR0}. We consider the parameter values
\begin{equation*}
	\begin{array}{ccccc}
\mm = 1/(365\cdot 70)\,, & \pu = 0.80\,, & \gu = 0.01\,, & \dur = 0.005\,,\\
\mr = 1/(365\cdot 70)\,, & \pr = 0.40\,, & \gr = 0.05\,, & \dru = 0.010\,.
	\end{array}
\end{equation*}
We fix the initial conditions 
\begin{equation*}
	\begin{array}{ccc}
N_{u0} = 999\,, & I_{u0} = 1 \,,&\widetilde{S}_{u0} = 0\,,\\
N_{r0} = 300\,,  & I_{r0} = 0 \,, &\widetilde{S}_{r0} = 0
	\end{array}
\end{equation*}
and vary $\bu$ and $\br$; see our results in Figure \ref{fig:ex6}. We observe that the conditions in Eqs.~\eqref{eq:condR0} precisely describe the disease-free region (in gray). Therefore, $\mathcal{R}_{0u}$ or $\mathcal{R}_{0r}$ can have a value that is slightly larger than $1$ in situations with a disease-free steady state. This is not the case when we consider just one patch, so movement can be beneficial.

\begin{figure}[htb!]
\centering
{\includegraphics[width=0.65\linewidth]{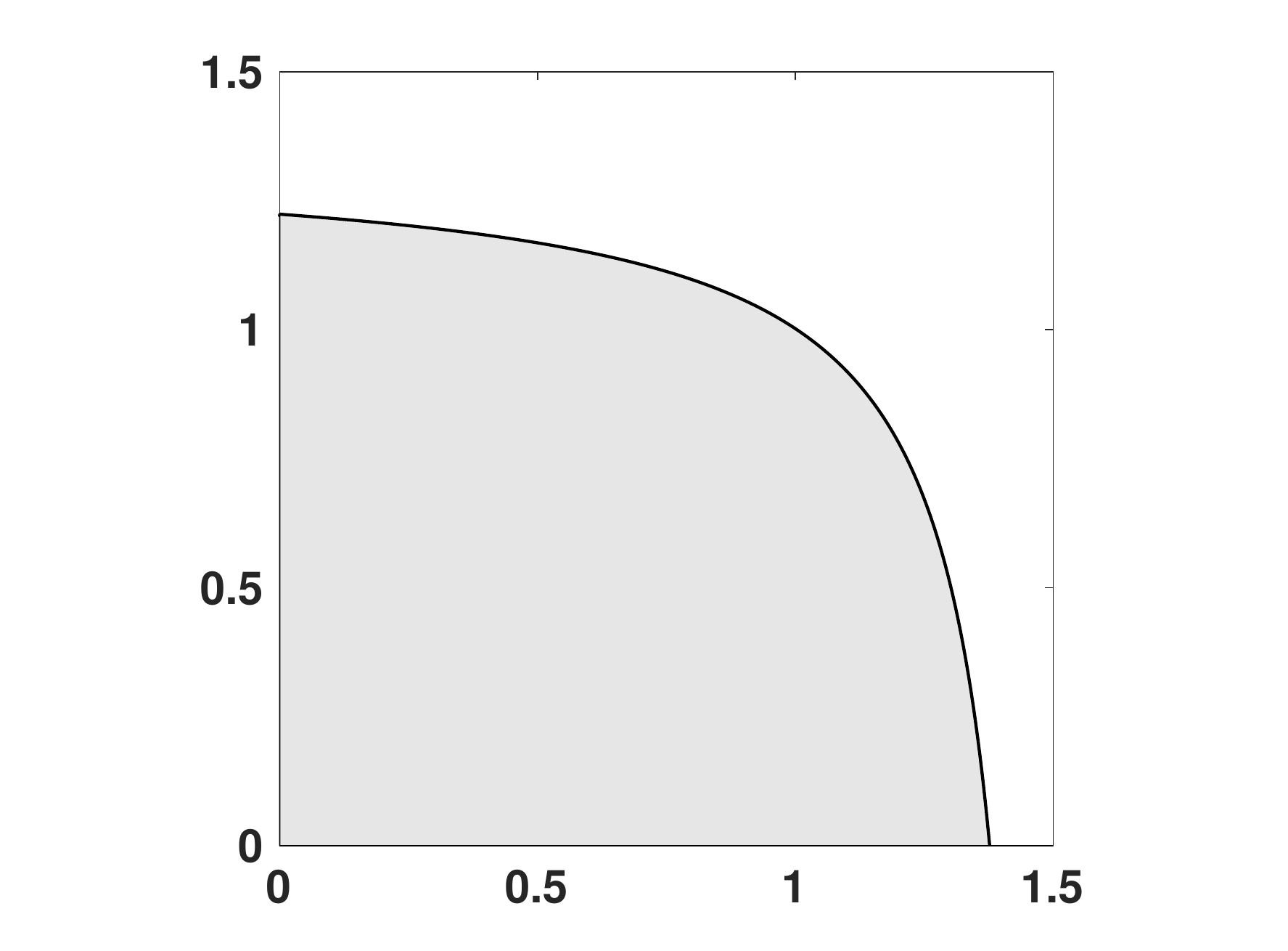}}
\caption{Population as a function of the local basic reproduction numbers in the urban $\mathcal{R}_{0u}$ (horizontal axis) and rural $\mathcal{R}_{0r}$ (vertical axis) environments; see Example \ref{ex:6}. The gray area represents the region of local asymptotic stability of both (urban and rural) populations from Lemma \ref{lem:diseaseFree}.
\label{fig:ex6}}
\end{figure}
}

\end{example}

\begin{example} \label{ex:6b} 
{\rm
In this example, we study how a disease spreads through the two populations as we vary $\dur$ and $\dru$ when initially one infected person arrives at one patch. We consider the parameter values
\begin{equation*}
	\begin{array}{ccccc}
\mm = 1/(365\cdot 70)\,, & \pu = 0.80\,, & \gu = 0.15\,, & \bu = 3\cdot{10}^{-1}\,,\\
\mr = 1/(365\cdot 70)\,, & \pr = 0.40\,, & \gr = 0.10\,, & \br = 2\cdot{10}^{-3}\,.
	\end{array}
\end{equation*}
We fix the initial conditions 
\begin{equation*}
	\begin{array}{ccc}
N_{u0} = 99999\,, & I_{u0} = 1\,, &\widetilde{S}_{u0} = 0\,,\\
N_{r0} = 30000\,,  & I_{r0} = 0\,, &\widetilde{S}_{r0} = 0
	\end{array}
\end{equation*}
and vary $\dur$ and $\dru$. We show our results in Figure \ref{fig:ex6B}.

\begin{figure}[htb!]
\centering
\subfloat[Steady-state fraction ($I_u^*$) of infected individuals in the urban patch.  \label{fig:ex6A}]{\includegraphics[width=0.49\linewidth]{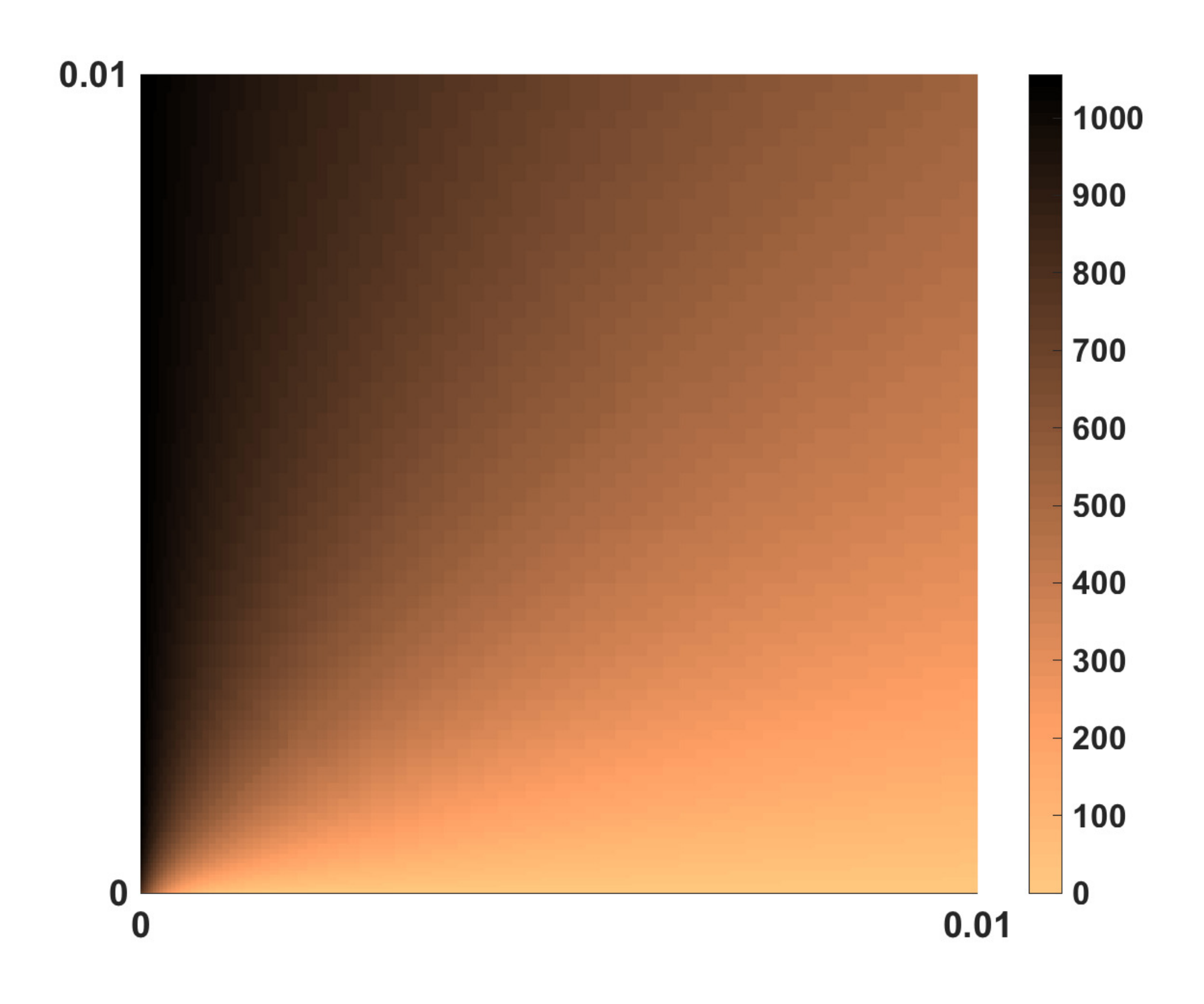}}
\hfill
\subfloat[Steady-state fraction ($I_r^*$) of infected individuals in the rural patch. \label{fig:ex6B_}]{\includegraphics[width=0.49\linewidth]{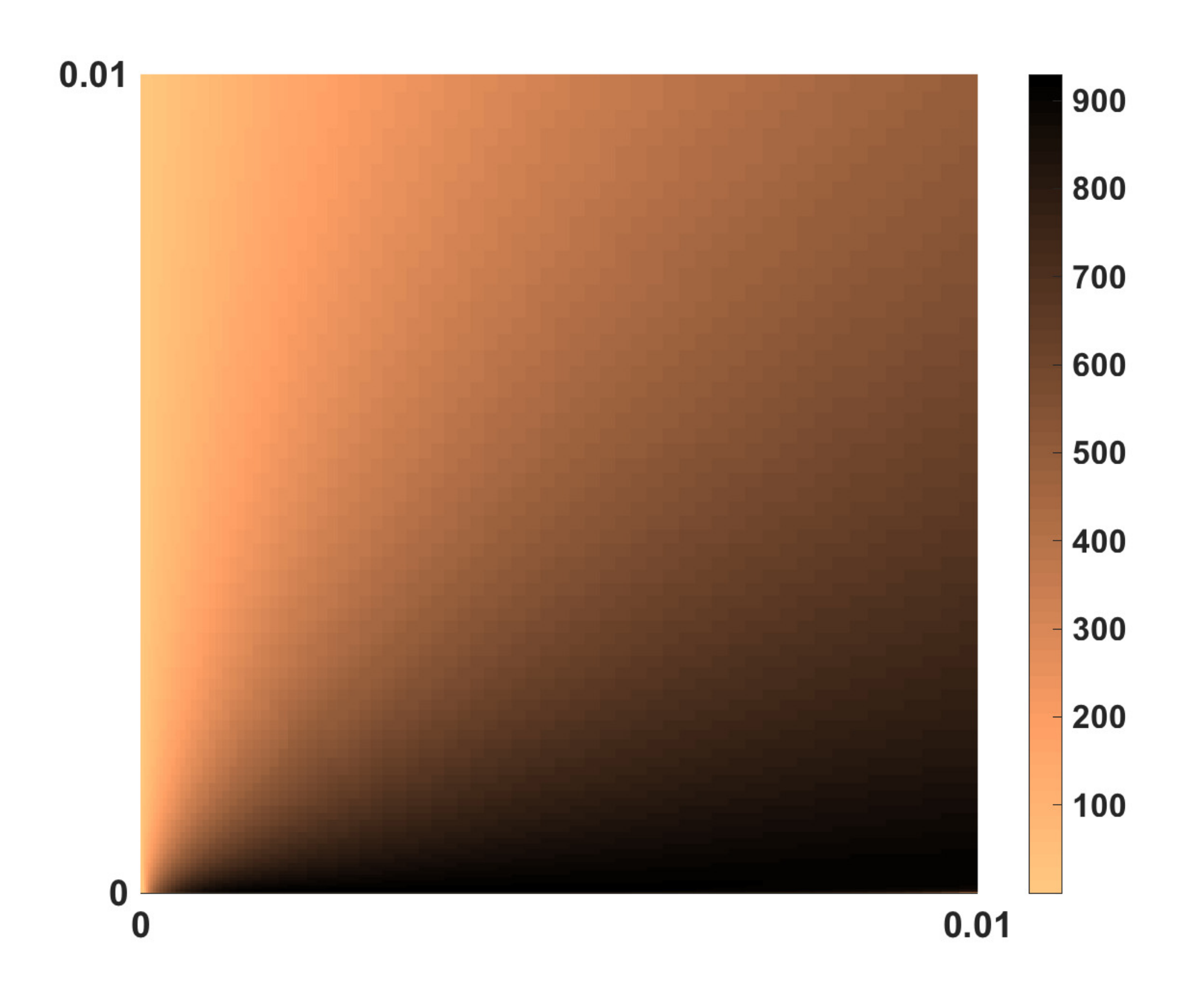}}
\caption{The effect of a single infected individual in the urban patch as a function of movement $\dur$ (horizontal axis) and $\dru$ (vertical axis); see Example \ref{ex:6b}.
\label{fig:ex6B}}
\end{figure}
}
\end{example}

\begin{example} \label{ex:varDeltas}% file ex3_delta_var.m
{\rm
To model different rates of movement between patches on weekdays and weekends, we now take $\dur$ and $\dru$ to be piecewise constant and periodic. Specifically, we take $\dur(t)$ and $\dru(t)$ as in Figure \ref{fig:deltas}. In Figure \ref{fig:ex3_delta_var}, we show the results of our numerical computations using the parameter values 
\begin{equation*}
	\begin{array}{ccccc}
\mm = 1/(365\cdot 70)\,, & \pu = 0.80\,, & \gu = 0.15\,, & \bu = 3\cdot{10}^{-4}\,,\\
\mr = 1/(365\cdot 70)\,, & \pr = 0.40\,, & \gr = 0.10\,, & \br = 2\cdot{10}^{-5}
	\end{array}
\end{equation*}
with initial conditions 
\begin{equation*}
	\begin{array}{ccc}
N_{u0} = 999\,, & I_{u0} = 1\,, &\widetilde{S}_{u0} = 0\,,\\
N_{r0} = 300\,,  & I_{r0} = 0\,, &\widetilde{S}_{r0} = 0\,.
	\end{array}
\end{equation*}

\begin{figure}[ht!]
\centering
\subfloat{\includegraphics[width=0.5\linewidth]{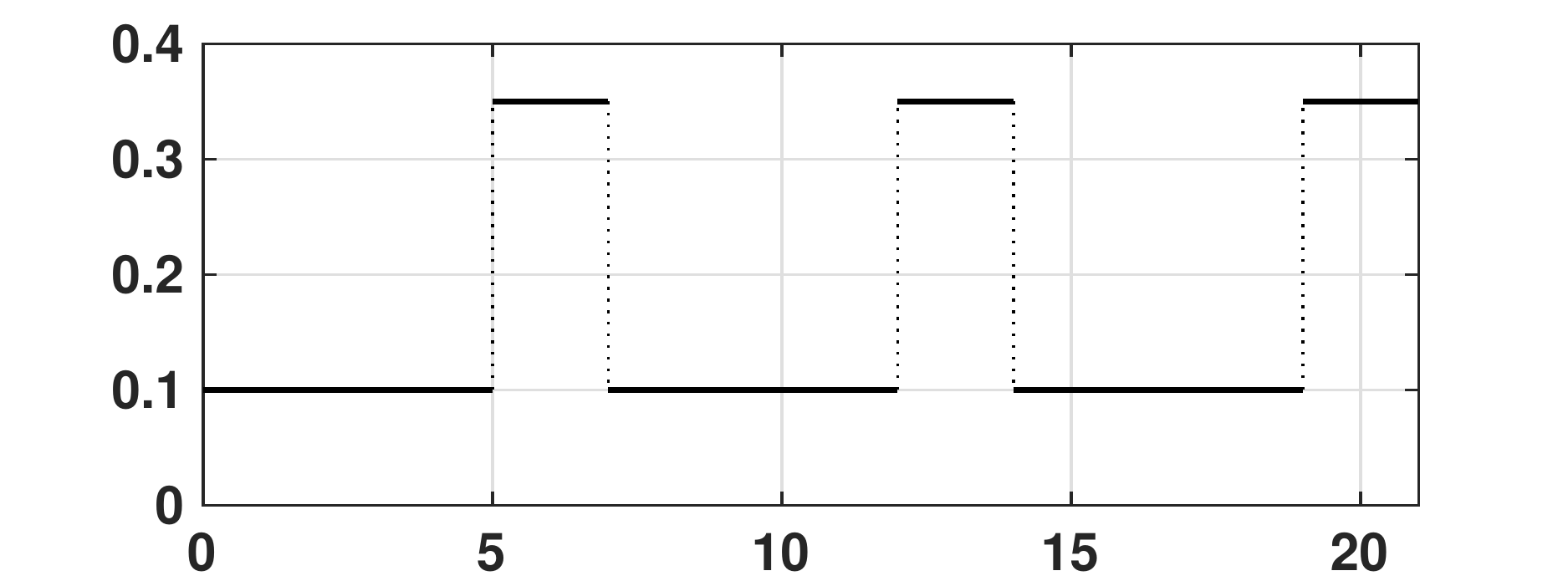}}
\hfill
\subfloat{\includegraphics[width=0.5\linewidth]{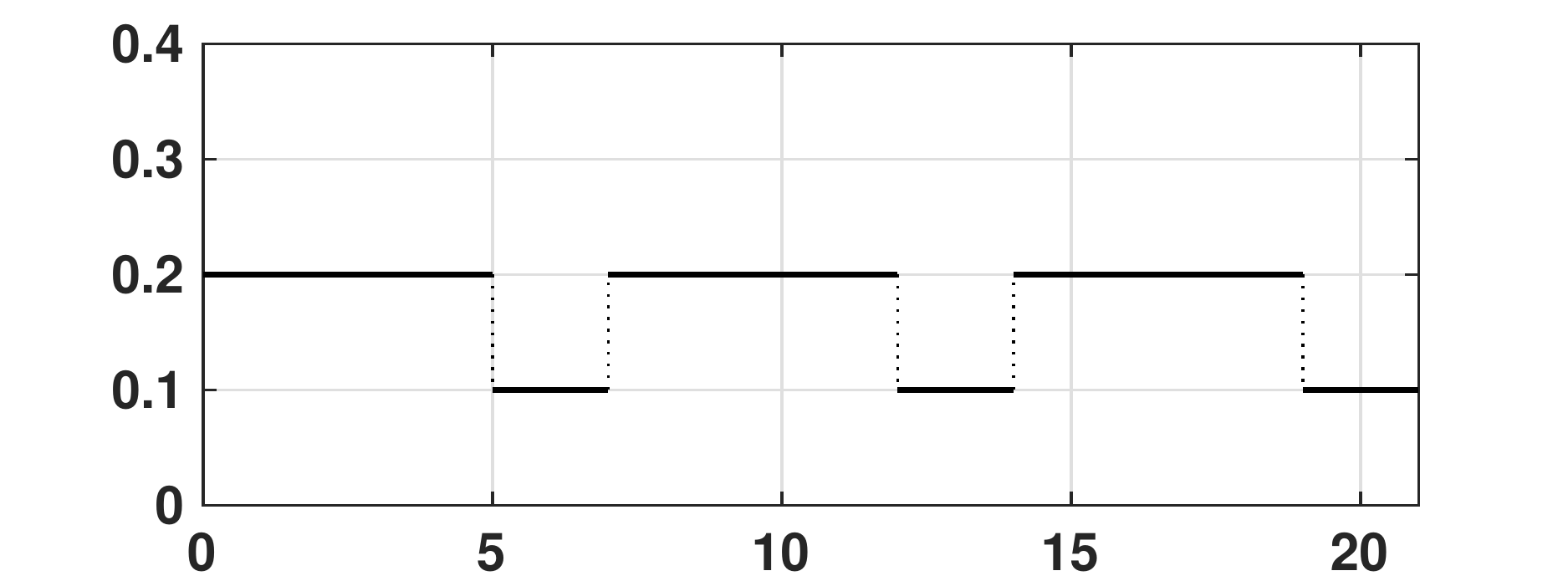}}
\caption{Periodic, piecewise constant values of the movement (left) $\dur$ and (right) $\dru$ as a function of time (in days). We use these functions in Example \ref{ex:varDeltas}.
\label{fig:deltas}}
\end{figure}

\begin{figure}[ht!]
\centering
\subfloat{\includegraphics[width=0.49\linewidth]{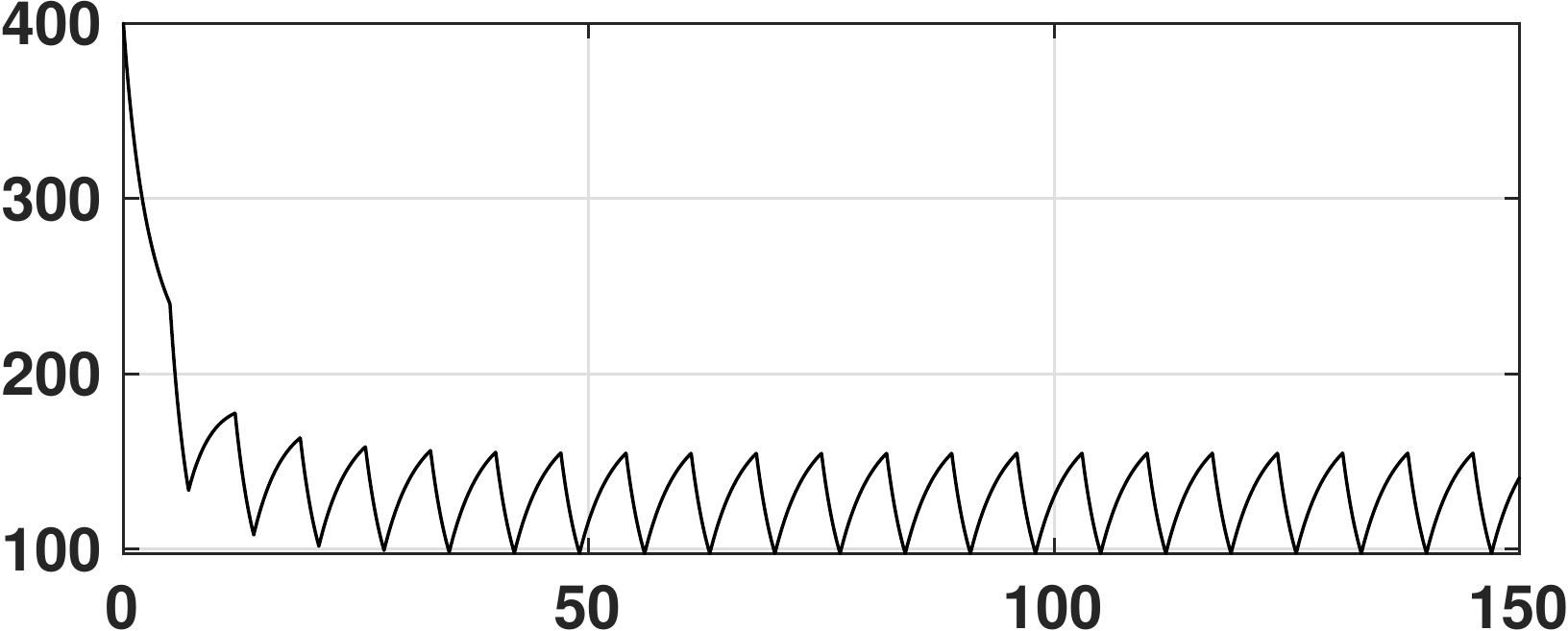}}
\hfill
\subfloat{\includegraphics[width=0.49\linewidth]{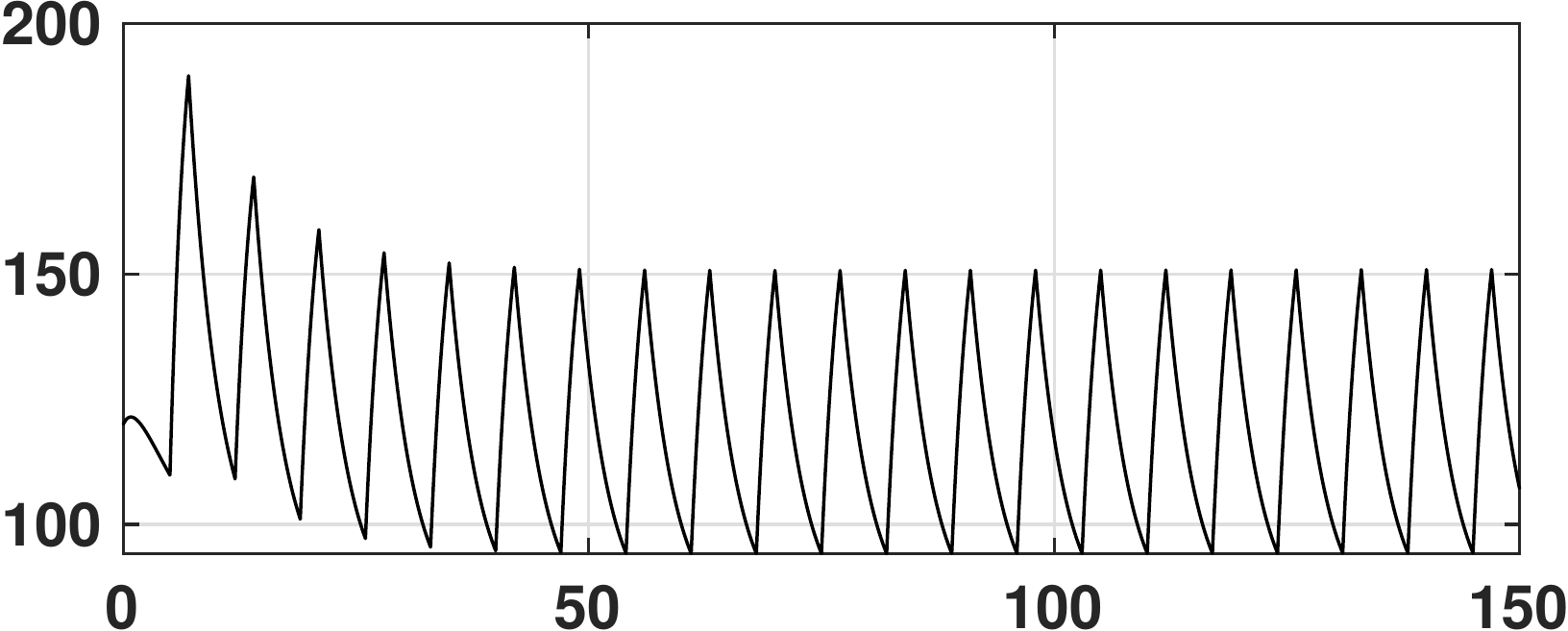}}
\caption{The number of infected individuals in the (left) urban (i.e., $I_u(t)$) and (right) rural $I_r(t)$ populations as a function of time (in days) for the periodic, piecewise constant movements $\dur$ and $\dru$ from Figure \ref{fig:deltas}; see Example \ref{ex:varDeltas}.
\label{fig:ex3_delta_var}}
\end{figure}
}
\end{example}

\begin{example} \label{eq:ninePts} {\rm 

We now revisit Remark \ref{rem:severalPoints}, where we stated that we can observe numerically the existence of several distinct endemic steady states. Equation \eqref{eq_systemG1G2} is equivalent to a polynomial equation (as a function of $B_u$ or $B_r)$ of degree $9$, for which there can exist $9$ solutions (for $i_u^*$ or $i_r^*$). For the parameter values
\begin{equation*}
	\begin{array}{ccccc}
\mm = 1/(365\cdot 70)\,, & \pu = 1\,, & \gu = 0.15\,, & \bu = 3\cdot{10}^{-4}\,, & \dur = 4\cdot 10^{-6}\,,\\
\mr = 1/(365\cdot 70)\,, & \pr = 2\,, & \gr = 1.00\,, & \br = 2\cdot{10}^{-3}\,, & \dru = 5\cdot 10^{-5}\,,
	\end{array}
\end{equation*}
we see that all $9$ solutions are in the interval $[0,1]$. They correspond to feasible steady states $(i_u^*,i_r^*)$; see Figure \ref{fig:ex_9pts}. We see numerically that $4$ of these points are locally stable.

\begin{figure}[ht!]
\centering
{\includegraphics[width=0.7\linewidth]{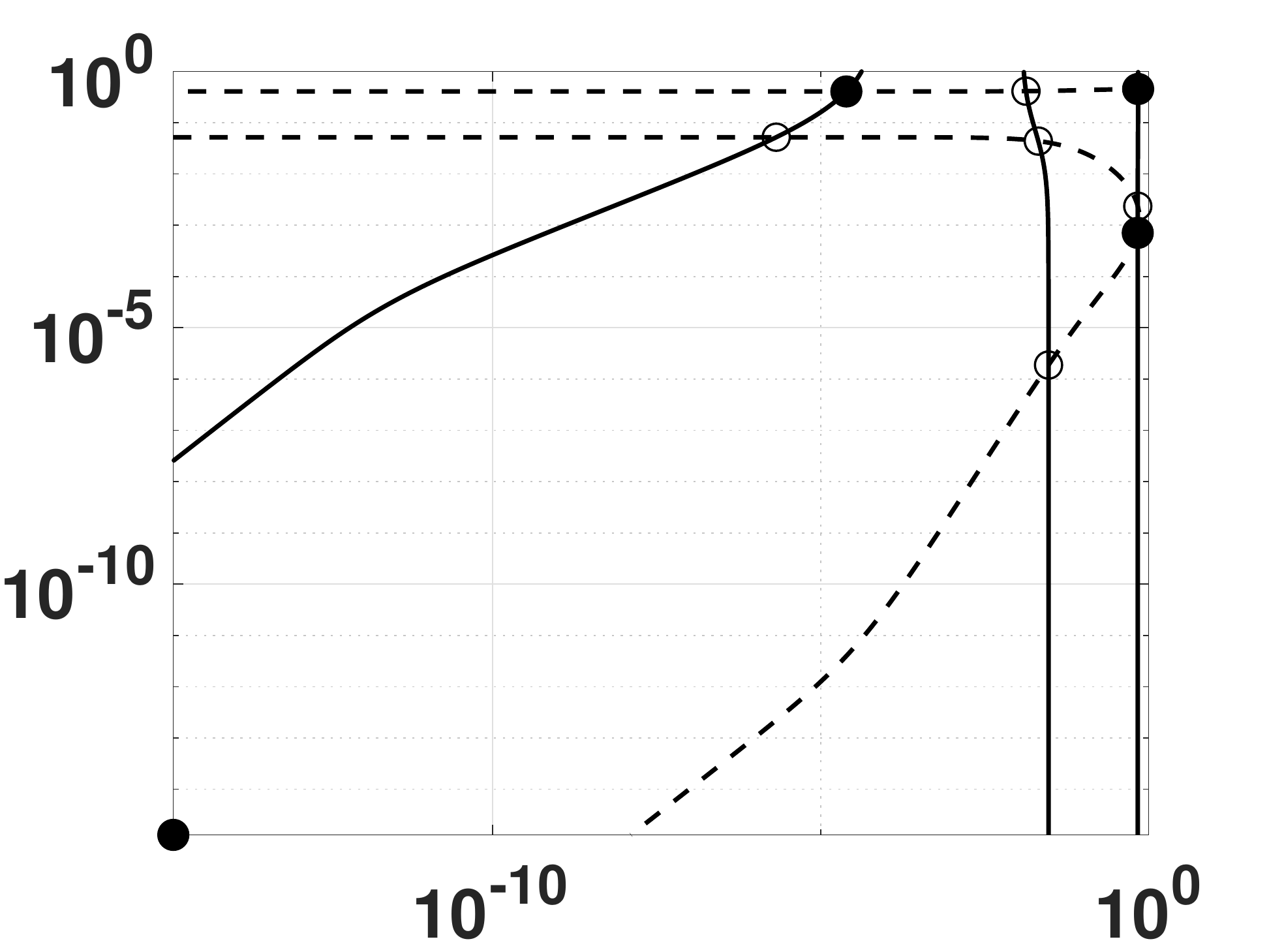}}
\caption{We plot $G_1(i_u,i_r)=0$ versus $G_2(i_u,i_r)=0$. We observe numerically that there are $4$ locally stable steady states, illustrating that movement between patches can introduce multiple steady states. We show the locally stable steady states with solid disks and the other steady states with open disks.
\label{fig:ex_9pts}} 
\end{figure}

}
\end{example}

%%%%

\section{Conclusions and Discussion} \label{sec:disc}

The dynamics of spreading diseases are influenced significantly by spatial heterogeneity and population movement. In this paper, we illustrated the importance of incorporating movement into models of disease dynamics using a simple but biologically meaningful model. Specifically, we constructed a two-patch compartmental model that incorporates movement between urban and rural populations, as well as the possibility of reinfection after recovery. 

When there is a lot of population movement, there are regions of local asymptotic stability of the disease-free steady-state even when the basic reproduction number $\mathcal{R}>1$. This arises predominantly from the numerous individuals who move between patches. 

The exploration of interacting populations plays an important role in the understanding of disease dynamics. Many models of disease spreading focus on a single population \cite{ccc}, but populations do not exist in isolation. Using our two-patch model with urban and rural environments, we illustrated several examples of plausible real-world scenarios in which movement yields insightful information about disease spreading and epidemics. We expect that such dynamics will be relevant for studies of disease spreading on networks, such as when many people commute daily between their homes in rural areas and work in urban centers (as is the case in many countries in South and Central America).

%%%%%

\section{Acknowledgements}

We thank the Research Center in Pure and Applied Mathematics and the Mathematics Department at Universidad de Costa Rica for their support during the preparation of this manuscript. The authors gratefully acknowledge institutional support for project B8747 from an UCREA grant from the Vice Rectory for Research at Universidad de Costa Rica. We also acknowledge helpful discussions with Prof. Luis Barboza, Prof. Carlos Castillo-Chavez, and Prof. Esteban Segura.

%%%%%%

%\newpage

\end{document}